\newcommand\numberthis{\addtocounter{equation}{1}\tag{\theequation}}
\def\equiv{\mathrel{\ensurestackMath{\stackon[1.5pt]{=}{\scriptscriptstyle\Delta}}}}
\newcommand{\proj}{\operatorname{proj}}
\newtheorem{example}{Example}
\newcommand{\obs}{\mathrm{obs}}
\newcommand{\Tspace}{\bar{\mathcal T}}
\patchcmd{\@settitle}{\uppercasenonmath\@title}{\large}{}{}
\patchcmd{\@setauthors}{\MakeUppercase}{\vspace{-1em}\normalsize}{}{}
\patchcmd{\section}{\scshape}{\bfseries}{}{}
\begin{document}

\pagestyle{plain} \bibliographystyle{ecca} \title{Empirical Bayes shrinkage (mostly) does
not
 correct \\ the measurement error in regression} \pagestyle{plain}
	
	\author{Jiafeng Chen, Stanford University \\ Jiaying Gu, University of
	Toronto \\ Soonwoo Kwon, Brown University 	\vspace{3em}}

      \thanks{Version:  \today. We thank Isaiah Andrews, Xiaohong Chen, John Friedman, Peter Hull,
       Patrick Kline, Chris Walters, and Merrill Warnick for helpful discussions. Kwon acknowledges support from the Annenberg Institute. Juan
       Yamin Silva provided excellent research assistance.}

	\begin{abstract}
		In the value-added literature, it is often claimed that regressing on empirical
		Bayes shrinkage estimates corrects for the measurement error problem in linear
		regression. We clarify the conditions needed; we argue that these conditions are
		stronger than the those needed for classical measurement error correction, which
		we advocate for instead. Moreover, we show that the classical estimator cannot be
		improved without stronger assumptions. We extend these results to regressions on
		nonlinear transformations of the latent attribute and find generically slow
		minimax estimation rates.
	\end{abstract}
	\maketitle

\newpage 
	\section{Introduction}

Heterogeneity of individuals is a vital element in many important areas of inquiry within
economics. Empirical Bayes methods \citep{robbins56} are applicable in many such settings
to denoise individual fixed effects from noisy data. These methods are increasingly
widely applied: Researchers use them to estimate individual effects of teachers (\citealp
{kane2008does, chetty2014measuringa, gilraine2020new}), mobility of
geographies (\citealp{chetty2018impacts}), value-added of hospitals (\citealp
{chandra2016health}), skill of patent examiners (\citealp{feng2020crafting}), quality of
managers (\citealp{fenizia2022managers}), and individual income dynamics (\citealp
{gu2017unobserved}), among others.

Researchers also often hope to quantify how unobserved individual attributes, like a
teacher value-added, predicts downstream economic outcomes, like long-term student
outcomes. In such settings, there is a common intuition in the empirical
literature that empirical Bayes shrinkage provides a correction for attenuation caused by
statistical noise in a linear regression estimator \citep
{jacob2005principle,kane2008estimating,chetty2014measuringa,angrist2023methods}. For
instance, \citet{angrist2023methods} write ``shrinkage corrects measurement error in
models that treat school value-added as a regressor,'' and this intuition appears
widespread. On the other hand, the classical literature in errors-in-variable regression
offers simple corrections for measurement error that apply in these settings
\citep{fuller2009measurement}, and it is unclear how including shrinkage estimates on the
right-hand side of a regression compares to the classical approach.%

This paper clarifies the conditions under which the regress-on-shrinkage estimator
corrects for measurement error.  Suppose the researcher would like to compute an
\emph{infeasible regression} of $Y_i$ on $\mu_i$, but only has access to noisy
measurements $ (X_i, \sigma_i)$, where $X_i \sim \Norm(\mu_i,
\sigma_i^2).$ Commonly, researchers fit the regression of $Y_i$ on $\hat\mu_i(X_i,
\sigma_i)$, where $\hat\mu_i(X_i,
\sigma_i)$ is a linear shrinkage estimate. Importantly, an often-ignored  condition for
 the consistency of this \emph{regress-on-shrinkage estimator} is that the degree of
 shrinkage---implicitly a function
 of the noise level $\sigma_i$---does not correlate with the outcome $Y_i$ once $\mu_i$ is
 controlled. Under such a condition, adding flexible functions in $\sigma_i$ to the
 infeasible regression would not change the coefficient. This is a condition much like
 \emph{precision independence} \citep{walters2024empirical,chen2023empirical} and can be
 economically unrealistic. On the other hand, in this setting, because the $\sigma_i^2$
 are observed, the attentuation bias in the regression of $Y_i$ on $X_i$ is
 estimable---and thus can be directly corrected. Such classical measurement error
 corrections does not impose any precision independence assumption. This makes the
 classical corrections more robust and---in our view---preferable.

Second, we show that without stronger assumptions, this classical estimator is effectively
the \emph{only} reasonable estimator for the regression coefficient in the infeasible
regression, up to asymptotic equivalence; it is thus (vacuously) semiparametrically
efficient. This result shows that the deficiencies of the regress-on-shrinkage
estimator are not due to insufficiently flexible empirical Bayes modeling. No amount of
flexible modeling---despite tools proposed by, e.g., \citet
{chen2023empirical,gilraine2020new,kwon2023optimalshrinkageestimationfixed}---can improve
on the classical measurement error correction.

Third, we extend our results to more complex settings, where the infeasible regression
involves known nonlinear transforms $f(\mu_i)$ (e.g., indicators like $\one(\mu_i >
\mu_0)$ may represent ``high value-added'' teachers). Under strong assumptions---that both
 the infeasible regression and parametric empirical Bayes models are correctly
 specified--–a regression of $Y_i$ on $\E[f(\mu_i) \mid X_i, \sigma_i]$ recovers the
 infeasible regression coefficients. However, we caution that these strong assumptions are
 difficult to relax without relinquishing appealing features of the resulting estimator,
 due to the fundamental statistical difficulty of deconvolution \citep{cai2011testing}. To
 that end, we show that if the infeasible regression coefficient is estimable at
 polynomial $(n^{-\alpha})$ rates of convergence uniformly across data-generating
 processes that impose few assumptions, then $f (\cdot)$ must necessarily be an analytic
 function---an extreme smoothness requirement. Put differently, if $f(\cdot)$ is not
 smooth, then there is some data-generating process under which one would need
 exponentially-in-$k$ larger sample sizes to reduce uncertainty by a factor of $k$.

  We compare these methods in simulation and across two empirical applications. The
  simulation results confirm that the classical estimator performs well, and substantially
  better than the regress-on-shrinkage estimator, across a wide range of data generating
  processes. Regress-on-shrinkage estimates can be severely biased even under settings
  that are calibrated to real data. Moreover, we confirm that regressions involving
  nonlinear transformations $f(\mu_{i})$ are indeed difficult, with widely dispersed
  estimates even with a reasonably large sample size. 

  We then compare the methods on two empirical applications. The first one revisits
  \citet{feng2020crafting} which concerns about patent examination practise on patent
  outcomes. The second application revisits \citet{bau2020teacher}, who analyze teacher
  value added in a low income country. We find potentially substantive economic
  differences when using the classical estimator rather than the regress-on-shrinkage
  estimator in one of the applications. We also find supportive evidence that the
  conditions needed for the regress-on-shrinkage estimator to provide reliable estimates
  are violated.

This paper is related to the classical errors-in-variable regression literature
\citep{fuller2009measurement,bickel1987efficient} as well as a recent literature on
generated regressors. We highlight and compare a few. \citet{rose2022effects} study
multidimensional teacher value-added (e.g., value-added on criminal justice event, on
math performance, etc.); they advocate for estimating the variance-covariance matrix of
teacher value-added by correcting for measurement error, rather than by taking the
variance-covariance matrix of empirical Bayes posterior means. We show that the same
extends to regressions of downstream variables on value-added. \citet{deeb2021framework}
studies the regress-on-shrinkage estimator and proposes corrections to its standard
error that accounts for the uncertainty in estimating empirical Bayes hyperparameters.
In a similar setting, \cite{xie2025automatic} establishes conditions under which
regression-on-shrinkage estimators automatically yield valid inference without requiring
additional adjustments.  \citet{battaglia2024inference} study the generated regressor
problem with machine learning predictions for $\mu_i$. The setting is related but
distinct---they do not impose the Gaussian structure commonly imposed in the empirical
Bayes literature. To our knowledge, our efficiency and minimax rate results have not
appeared in the literature.

This paper proceeds as follows. \Cref{sec:ols} studies regression on latent $\mu_i$. 
\Cref{sec:nonlinear} presents our results for regression on $f(\mu_i)$ with nonlinear $f
(\cdot)$. \Cref{sec:simulation,sec:empirical} illustrate our results using a simulation
and two empirical applications.

\section{Linear regression coefficients}
\label{sec:ols}

Suppose we observe $(Y_i, X_i, \sigma_i)_{i=1}^n$ for a sample of individuals. Throughout,
we will refer to these individuals as teachers, but the applications extend beyond
teacher value-added. Here, $Y_i$ is some outcome variable, $X_i$ is a noisy measure of an
unobserved teacher attribute $\mu_i$, and $\sigma_i$ is the observed standard error for
$X_i$. For instance, $Y_i$ would be some attribute of a teacher $i$ (e.g. teacher-level
mean of student outcomes), $X_i$ would be the estimated teacher value-added of $i$,
$\mu_i$ is the true teacher value-added for teacher $i$, and $\sigma_i$ is the estimated
standard error for $X_i$. For expositional simplicity, our main results shall restrict to
setups of the data that aggregate to the teacher level. \Cref{sub:implement} discusses
implementations of analogous approaches with disaggregated data.

Following the empirical Bayes literature, we assume $X_i \mid
Y_i, \mu_i, \sigma_i \sim \Norm(\mu_i, \sigma_i^2)$ is Normally distributed
and unrelated to $Y_i$, collected in the following assumption:
\begin{as} \label{basicassp}
We assume $(Y_i, X_i, \mu_i, \sigma_i) \iid P_0$. We impose the following assumptions on $P_0$: 
	\begin{enumerate}
		\item $X_i \mid Y_i, \mu_i, \sigma_i \sim \Norm(\mu_i, \sigma_i^2)$
		\item $\sigma_\mu^2 \equiv \var(\mu_i) > 0$ and $\E[\sigma_i^2] > 0$.
		\item $P_0(\sigma_i > 0) = 1$. 
	\end{enumerate}
	\end{as}

\Cref{basicassp}(2) and (3) are standard regularity assumptions. \Cref{basicassp}(1) is
common in the empirical Bayes literature. The Normality of $X_i$ is motivated by the fact
that $X_i$ is typically an estimate of $\mu_i$ with micro-data within a teacher: For
instance, $X_i$ may be a teacher-level mean of student scores, and the central limit
theorem provides a Gaussian approximation, where $\sigma_i^2$ is the estimated standard
error \citep[see][]{walters2024empirical}.

\Cref{basicassp}(1) also implies that the outcome $Y_i$ does not predict the noise
 component of $X_i$, i.e., $X_i \indep Y_i \mid \mu_i, \sigma_i^2$. This assumption may be
 violated if, for instance, $X_i$ is the sample mean of student test scores at the teacher
 level, but $Y_i$ is the average downstream outcome of that same set of students.
 Meanwhile, if $X_i$ and $Y_i$ come from different sets of students (and we assume student
 outcomes are independent) or if $Y_i$ is an underlying characteristic of teacher
 $i$,\footnote{For instance, \citet{chandra2016health} consider a regression of hospital
 $i$ market size ($Y_i$) on hospital quality ($\mu_i$), which are then estimated from
 clinical outcome of patients ($X_i$). Here, it is reasonable to assume that hospital size
 is independent from the noise component of patient outcomes $X_i - \mu_i$. } then this
 assumption is reasonable. \Cref{sub:implement} considers general cases where $Y_i$ needs
 not be independent of the noise in $X_i$.

Empirical researchers are often interested in the population \emph{infeasible} regression
of $Y_i$ on $\mu_i$: \[
Y_i =
\alpha_0 + \beta_0 \mu_i + \eta_i, 
\numberthis
\label{model}
\] 
where the population regression coefficient $\beta_0 = \frac{\cov(Y_i, \mu_i)}{\var
(\mu_i)}$ by
definition. Here, we do not treat \cref{model} as a linear restriction on $\E[Y_i \mid
\mu_i]$, but simply as a definition for $\beta_0$. That is, $\beta_0$ is the OLS
 regression coefficient one would have obtained had one access to the true unobserved
 attribute $\mu_i$ and infinitely many observations. Regressions of this form appear in,
 among others, \citet
 {chandra2016health,
 chetty2014measuringa,jacob2005parents,jackson2018test,warnick2024instructor,mulhern2023beyond}.\footnote{We take as given that the infeasible regression coefficient \eqref{model} is the target
parameter. It is possible, however, that empirical researchers may prefer other
estimands. For instance, since decisions are only functions of $(X_i,
\sigma_i)$, we might be interested instead in the statistical relationship between $Y_i$
 and functions of $(X_i, \sigma_i)$. For instance, we might form a prediction $\hat\mu_i
 (X_i, \sigma_i)$ and we might want to assess how predictive
 \emph{the predicted teacher value-added} is for $Y_i$, instead of how predictive
 \emph{the true} $\mu_i$ is for $Y_i$. The former is more relevant, say, if we are more
  interested in assessing the quality of feasible predictions for teacher effects, rather
  than the inherent relationship between true teacher effects and outcomes.}

For instance, when $Y_i$ is the mean student  long-term outcome for those taught by
teacher $i$ and $\mu_i$ is a teacher value-added for a teacher experienced by student
$i$, then $\beta_0$ is the coefficient of the best linear prediction the long-term
outcome from true teacher value-added. Under appropriate identifying assumptions such
that $Y_i$ is unbiased for the mean potential outcome of students assigned to teacher
$i$, $\beta_0$ admits a causal interpretation as the best linear approximation to the
conditional mean of teacher causal effects on true teacher value added. %

Immediately, \eqref{model} implies that $\beta_0$ is identified
by the following formula. 
\begin{prop}
Under \cref{basicassp}, $\beta_0$ is equal to the following function of the joint
distribution of the observed data $(Y_i, X_i, \sigma_i)$:
\[
\beta_0 = \frac{\cov(Y_i, X_i)}{\var(X_i) - \E[\sigma_i^2]} = \underbrace{\frac{\cov(Y_i,
X_i)}
{\var
(X_i)}}_{\text{Regression coefficient of $Y_i$ on $X_i$}} \underbrace{\frac{\var(X_i)}
{\var (X_i) - \E
[\sigma_i^2]}}_{\text{Inflation factor}}.
\numberthis 
\label{eq:inflation}
\]
\end{prop}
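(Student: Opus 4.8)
The plan is to decompose the noisy measurement as $X_i = \mu_i + \epsilon_i$ with $\epsilon_i \equiv X_i - \mu_i$, and to extract the orthogonality properties that \Cref{basicassp}(1) confers on the noise $\epsilon_i$. Conditional Normality gives $\E[\epsilon_i \mid Y_i, \mu_i, \sigma_i] = 0$ and $\var(\epsilon_i \mid Y_i, \mu_i, \sigma_i) = \sigma_i^2$. By the law of iterated expectations, $\E[\epsilon_i] = 0$; and, crucially, conditioning on $Y_i$ inside the expectation yields $\E[Y_i \epsilon_i] = \E\bigl[Y_i \E[\epsilon_i \mid Y_i, \mu_i, \sigma_i]\bigr] = 0$ and likewise $\E[\mu_i \epsilon_i] = 0$. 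Hence $\epsilon_i$ is uncorrelated with both $Y_i$ and $\mu_i$: that is, $\cov(Y_i, \epsilon_i) = \cov(\mu_i, \epsilon_i) = 0$.

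With these facts in hand, I would compute the numerator and denominator of the claimed formula separately. For the numerator, bilinearity of covariance gives $\cov(Y_i, X_i) = \cov(Y_i, \mu_i) + \cov(Y_i, \epsilon_i) = \cov(Y_i, \mu_i)$. For the denominator, $\var(X_i) = \var(\mu_i) + 2\cov(\mu_i, \epsilon_i) + \var(\epsilon_i) = \var(\mu_i) + \var(\epsilon_i)$, and since $\E[\epsilon_i] = 0$ we have $\var(\epsilon_i) = \E[\epsilon_i^2] = \E\bigl[\E[\epsilon_i^2 \mid Y_i, \mu_i, \sigma_i]\bigr] = \E[\sigma_i^2]$. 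Subtracting gives $\var(X_i) - \E[\sigma_i^2] = \var(\mu_i)$, which is strictly positive by \Cref{basicassp}(2), so the ratio is well-defined.

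Combining the two pieces yields $\dfrac{\cov(Y_i, X_i)}{\var(X_i) - \E[\sigma_i^2]} = \dfrac{\cov(Y_i, \mu_i)}{\var(\mu_i)} = \beta_0$, which is the first equality. The second equality---the inflation-factor factorization---is then immediate algebra: multiply and divide by $\var(X_i)$.

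I expect the only genuinely substantive point to be the justification of the orthogonality conditions, namely that conditioning $X_i$ on $Y_i$ (in addition to $\mu_i$ and $\sigma_i$) is precisely what forces $\cov(Y_i, \epsilon_i) = 0$. This is exactly the content of the statement, accompanying \Cref{basicassp}(1), that $X_i$ is ``unrelated to $Y_i$'' given $\mu_i, \sigma_i$, i.e.\ $X_i \indep Y_i \mid \mu_i, \sigma_i$; everything else is routine moment algebra using the law of iterated expectations.
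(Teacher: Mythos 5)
Your proof is correct and follows essentially the same route as the paper: the paper's proof simply asserts the two observations (a) $\cov(Y_i, X_i) = \cov(Y_i,\mu_i)$ and (b) $\var(\mu_i) = \var(X_i) - \E[\sigma_i^2]$, which are exactly what you derive in detail from the decomposition $X_i = \mu_i + \epsilon_i$ and iterated expectations. Your write-up just makes explicit the orthogonality and variance calculations the paper leaves implicit.
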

\begin{proof}
The result follows immediately from the observations that (a) $\cov(Y_i, X_i) =
\cov
(Y_i, \mu_i)$ and (b) $\var(\mu_i) = \var(X_i) - \E[\sigma_i^2]$.
\end{proof}

\Cref{eq:inflation} suggests a simple analogue estimator for $\beta_0$ that replaces
population covariances, variances, and expectations with their sample
counterparts:\footnote{We shorthand $\E_n[W_i] = \frac{1}{n} \sum_{i=1}^n W_i$, $\var_n
(W_i) = \E_n[W_i^2] - (\E_n[W_i])^2$, and $\cov_n(W_i, Z_i) = \E_n[W_i Z_i] - \E_n[W_i]
\E_n[Z_i]$.} \[
\hat\beta = \frac{{\cov}_n(Y_i, X_i)}{{\var}_n(X_i) - {\E}_n
[\sigma_i^2]}. \numberthis \label{eq:inf_est}
\] 
Under standard conditions, $\hat\beta$ is consistent and asymptotically Normal, whose
 asymptotic distribution can be consistently estimated by a nonparametric bootstrap given
 $(Y_i, X_i, \sigma_i)$. This estimator is classical from the errors-in-variable
 regression literature (e.g., Section 3.1 of \citealp{fuller2009measurement}, (23) in 
 \citealp{deaton1985panel}) and more recently advocated by \citet
  {de2024estimating}. Moreover, the consistency and Normality of $\hat\beta$ does not
  require $X_i$ to be Normally distributed. 

Our first main point is to advocate for this classical estimator $\hat\beta$ as opposed to
common estimation approaches for \eqref{model}. A popular---and standard---alternative
estimator of $\beta_0$ is the regression coefficient of $Y_i$ on estimated empirical Bayes
posterior means $\hat\mu_i(X_i,\sigma_i)$, following a parametric empirical Bayes
procedure:
 \begin{align}
\tilde \beta &\equiv \frac{\cov_n(Y_i, \hat\mu_i)}{\var_n(\hat\mu_i)}
 \label{eq:reg_shr_est}\\
 \hat \mu_i (X_i, \sigma_i) &\equiv \frac{\sigma_i^2}{\sigma_i^2 + \hat\sigma_\mu^2} \hat\mu
+
\frac{\hat\sigma_\mu^2}{\sigma_i^2 + \hat\sigma_\mu^2} X_i
\label{eq:par_emp_bayes} \\
\text{ where }
\hat\mu &\equiv \E_n[Y_i] \text{ and } \hat \sigma_\mu^2 \equiv \var_n(Y_i) - \E_n
[\sigma_i^2]. \nonumber
 \end{align} This approach is followed in, e.g., \citet
  {jacob2005parents, kane2008estimating, warnick2024instructor, jackson2018test,
  bau2020teacher,angelova2023algorithmic}. The estimator $\tilde \beta$ is widely thought to be consistent for
  $\beta_0$, which we argue rests strong assumptions, often implicitly imposed.\footnote{%
\citet{jacob2005parents} (Appendix C) states that ``one can easily show that using the
EB estimates as an explanatory variable in a regression context will yield point estimates
that are unaffected by the attenuation bias that would exist if one used simple OLS
estimates.''

\citet{angrist2023methods} write that ``shrinkage corrects measurement error in models
 that treat school value-added as a regressor. Putting the unbiased but noisy estimate
 [in our notation, $X_i$] on the right-hand side of a regression results in attenuation
 bias toward zero due to classical measurement error; the posterior mean introduces
 non-classical measurement error that corrects this so that a regression with [$\mu_i
 (X_i, \sigma_i)$] on the right yields the same coefficient as using the true
 [$\mu_i$].'' } We also note that when $\sigma_i^2 = \sigma^2$ are constant for all $i$,
 then $\tilde \beta = \hat\beta$, but they are no longer equal in heteroskedastic
 settings.

  To this end, we show that, first, the estimator $\tilde \beta$ is consistent for
  $\beta_0$ only under much stronger conditions than \cref{basicassp}, rendering it less
  robust and less preferable to $\hat\beta$. Second, we show that it is impossible to
  improve upon $\hat\beta$ with any other procedure, including more flexible empirical
  Bayes procedures \citep
  {kwon2023optimalshrinkageestimationfixed,gilraine2020new,chen2023empirical}, at least
  without imposing stronger assumptions. Under \cref{basicassp}, any consistent and
  asymptotically Normal estimator of $\beta_0$ is in fact asymptotically equivalent to
  $\hat\beta$. Thus, any empirical Bayes procedure can at most match the performance of
  $\hat\beta$.

\subsection{When is $\tilde \beta$ consistent?}

Since $\mu \equiv \E[\mu], \sigma_\mu^2\equiv \var(\mu)$ are consistently estimable, let
us restrict attention to an
oracle counterpart of $\tilde \beta$ where $\mu,\sigma_\mu^2$ are known, namely \[
\tilde\beta^* \equiv \frac{\cov_n(Y_i, \mu_i^*)}{\var_n(\mu_i^*)} \text{, with population
limit } \tilde\beta_0\equiv \frac{\cov_{P_0}(Y_i, \mu_i^*)}{\var_{P_0}(\mu_i^*)}.
\]

We first formalize the folklore in the empirical literature on the consistency of $\tilde
\beta$.
\begin{as}
\label{as:strong_high_level}
For $\mu_i^* = \mu_i^*(X_i, \sigma_i)$ the oracle counterpart to 
\eqref{eq:par_emp_bayes}, the distribution
$P_0$
satisfies
\begin{enumerate}
	\item (Forecast unbiasedness) $\cov(\mu_i^*, \mu_i) = \var(\mu_i^*)$ 
	\item (Exogeneity) $\E[\eta_i \mu_i^*] = 0$.
\end{enumerate}
\end{as}
\begin{prop}
\label{prop:consistency_eb}
Under \cref{as:strong_high_level,basicassp}, $\tilde \beta_0 = \beta_0$.
\end{prop}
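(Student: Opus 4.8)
The plan is to work entirely at the population level and verify the identity $\tilde\beta_0 = \beta_0$ by directly evaluating the numerator $\cov_{P_0}(Y_i, \mu_i^*)$. The only inputs I expect to need are the defining moment conditions of the infeasible regression together with the two parts of \cref{as:strong_high_level}; notably, I do not anticipate using the explicit linear-shrinkage form of $\mu_i^*$ in \eqref{eq:par_emp_bayes} at all—that form matters only for whether the high-level assumptions hold, not for the algebra of this proposition.

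First I would recall that $\beta_0$ and $\alpha_0$ are defined as the OLS slope and intercept, so the residual $\eta_i = Y_i - \alpha_0 - \beta_0 \mu_i$ satisfies $\E[\eta_i] = 0$ by construction. Substituting $Y_i = \alpha_0 + \beta_0 \mu_i + \eta_i$ into the covariance and using bilinearity gives
\[
\cov_{P_0}(Y_i, \mu_i^*) = \beta_0\, \cov_{P_0}(\mu_i, \mu_i^*) + \cov_{P_0}(\eta_i, \mu_i^*).
\]
Next I would dispatch the residual term: since $\E[\eta_i] = 0$, we have $\cov_{P_0}(\eta_i, \mu_i^*) = \E[\eta_i \mu_i^*]$, which vanishes by the exogeneity condition \cref{as:strong_high_level}(2). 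This leaves $\cov_{P_0}(Y_i, \mu_i^*) = \beta_0\, \cov_{P_0}(\mu_i, \mu_i^*)$.

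Finally I would invoke the forecast-unbiasedness condition \cref{as:strong_high_level}(1), namely $\cov_{P_0}(\mu_i^*, \mu_i) = \var_{P_0}(\mu_i^*)$, to replace $\cov_{P_0}(\mu_i, \mu_i^*)$ with $\var_{P_0}(\mu_i^*)$, yielding $\cov_{P_0}(Y_i, \mu_i^*) = \beta_0 \var_{P_0}(\mu_i^*)$. Dividing by $\var_{P_0}(\mu_i^*)$, which is strictly positive by \cref{basicassp}(2)–(3) (so the ratio is well defined), gives $\tilde\beta_0 = \beta_0$.

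There is no genuine analytic obstacle here; the argument is a two-line moment computation. The only point requiring mild care is the bookkeeping distinction between $\E[\eta_i \mu_i^*]$ and $\cov_{P_0}(\eta_i, \mu_i^*)$, which coincide precisely because $\eta_i$ has mean zero as an OLS residual—this is why the exogeneity assumption is stated as an uncentered cross-moment. Conceptually, the decomposition makes transparent that the two assumptions do exactly complementary work: forecast unbiasedness controls how the oracle shrinkage forecast covaries with the true latent signal $\mu_i$, while exogeneity ensures the shrinkage forecast is orthogonal to the infeasible-regression residual $\eta_i$. Together they are both necessary and sufficient for this chain of equalities to go through, which is precisely the content the proposition is meant to formalize.
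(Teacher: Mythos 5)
Your proof is correct and is essentially the paper's own first argument (the formalization of \citet{jacob2005parents}'s reasoning): both decompose $\cov_{P_0}(Y_i,\mu_i^*)$ via $Y_i=\alpha_0+\beta_0\mu_i+\eta_i$, kill the residual term with \cref{as:strong_high_level}(2), and convert $\cov_{P_0}(\mu_i,\mu_i^*)$ into $\var_{P_0}(\mu_i^*)$ with \cref{as:strong_high_level}(1). Your version phrases the paper's substitution $\mu_i=\mu_i^*+v_i$ directly as a covariance computation, which is only a cosmetic difference (and handles the centered-versus-uncentered moment bookkeeping a bit more carefully).
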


\Cref{as:strong_high_level}(1) is often referred to as forecast unbiasedness
\citep{chetty2014measuringa,chetty2014measuringb,stigler19901988}. It assumes that the
empirical Bayes posterior means are reasonable predictors of the true unobserved
attribute, in the sense that a hypothetical regression of the unobserved attribute $\mu_i$
on its predicted value $\mu_i^*$ returns a regression coefficient of $1$.\footnote{When
the empirical Bayes prior is well-specified, i.e. $\mu_i^* = \E[\mu_i \mid X_i,
\sigma_i]$, we have that $\E[\mu_i
\mid
\mu_i^*] = \mu_i^*$ by the law of iterated expectations, and 
\cref{as:strong_high_level}(1) is satisfied. For Gaussian-prior empirical Bayes, even if
 the empirical Bayes model is misspecified, \cref{as:strong_high_level}(1) would be true
 if we only assume
\eqref{eq:moment_indep}, meaning that $\sigma_i$ does not predict the first two moments of
$\mu_i$.} 

\Cref{as:strong_high_level}(2) is a key assumption that seems to be implicitly taken for
granted in the literature. Importantly, the condition $\E[\eta_i X_i] = 0$---which we
impose by definition of $\beta_0$ as the population projection coefficient---does not on
its own justify \cref{as:strong_high_level}(2).\footnote{Under homoskedasticity, i.e.
$\sigma_i = \sigma$ for all $i$, $\mu_i^*$ is simply a linear function of $X_i$, and
$\E[\mu_i^* \eta_i] = 0$ holds by construction; however, the same is not true when
$\sigma_i$ are heterogeneous and can be correlated with $Y_i$.} The reason is that
$\mu_i^*$ is a function of both $X_i$ and $\sigma_i$, and we have made no assumptions on
how $\sigma_i$ interacts with $\eta_i$.

Under \cref{basicassp,as:strong_high_level}, we review and formalize two arguments in the
literature for \cref{prop:consistency_eb}, and point out where each of the assumptions in
\cref{as:strong_high_level} is used. In particular, we have not found any work in the
literature that explicitly mentions \cref{as:strong_high_level}(2) or sufficient
conditions for it, even though---as we shall see---\cref{as:strong_high_level}(2) is
crucial for the consistency of $\tilde\beta$.

We formalize the two main arguments used to justify the consistency of $\tilde{\beta}$.  First, many papers justify $\tilde \beta_0 = \beta_0$
using an argument from
\citet{jacob2005parents}.

\begin{proof}(\citet{jacob2005parents}'s argument for \cref{prop:consistency_eb})
	Observe that for $v_i =
\mu_i - \mu_i^*$, under \cref{as:strong_high_level}(1), \[
\mu_i = \mu_i^* + v_i \quad \E_{P_0}[v_i \mu_i^* ] = 0.
\]
As a result, we can rewrite \eqref{model} as $Y_i = \alpha_0 + \beta_0 \mu_i^* + \beta_0
v_i + \eta_i$, $\E_{P_0}[\eta_i X_i] = 0$. 
By \cref{as:strong_high_level}(2), \[
\E[(\beta_0 v_i + \eta_i) \mu_i^*] = \beta_0 \E[v_i \mu_i^*] + \E[\eta_i \mu_i^*] = 
\colorbox{Red!5}{$\E
[\eta_i\mu_i^*] = 0$}.
\]
As a result, regressing $Y_i$ on $\mu_i^*$ recovers the coefficient $\beta_0$. 
\end{proof}

A second popular intuition justifies $\tilde\beta_0 = \beta_0$ by appealing to
instrumental variables \citep[][pp.~2639]{chetty2014measuringb}.

\begin{proof}(IV argument for \cref{prop:consistency_eb}) We can write $
 (Y_i, \mu_i, \mu_i^*)$ as a two-stage least-squares specification, where $\mu_i$ is an
 ``endogenous treatment'' and $\mu_i^*$ is an ``exogenous instrument'':
\begin{align*}
Y_i &= \alpha_0 + \beta_0 \mu_i + \eta_i \\
\mu_i &= \gamma_0 + \pi_0 \mu_i^* + v_i.
\end{align*}
\Cref{as:strong_high_level}(1) implies that the population first-stage coefficient is one:
 $\pi_0 = 1$. Crucially, \cref{as:strong_high_level}(2) is exactly the exogeneity and
 exclusion assumption, implying that $\mu_i^*$ is a valid instrument. Therefore, $\beta_0$
 is equal to the population two-stage least-squares coefficient, which is further equal to
 the reduced-form coefficient $\tilde \beta_0$ since the first-stage coefficient $\pi_0 =
 1$:
 \[\beta_0 =
\frac{\cov (Y_i, \mu_i^*)} {\cov
(\mu_i^*,
\mu_i)} = \frac{\overbrace{\cov(Y_i, \mu_i^*)/\var(\mu_i^*)}^{\text{Reduced-form coef.
of $Y_i$ on $\mu_i^*$}}} {\underbrace{\cov (\mu_i^*, \mu_i) / \var (\mu_i^*)}_{
\text{First-stage coef. of $\mu_i$ on $\mu_i^*$, $\pi_0 = 1$}}} = \frac{\cov (\mu_i^*,
\mu_i)}{
\var
(\mu_i^*)} = \tilde\beta_0.\]
\end{proof}

\Cref{as:strong_high_level} is a set of high-level assumptions imposed to justify $\tilde
\beta_0
= \beta_0$. The following assumptions are natural sufficient conditions for
\cref{as:strong_high_level}. Though they are stronger than \cref{as:strong_high_level},
scenarios that satisfy \cref{as:strong_high_level} but violate \cref{as:strong_low_level}
are economically knife-edge. Moreover, all assumptions in \cref{as:strong_low_level} are
testable.
\begin{as}[Sufficient conditions for \cref{as:strong_high_level}]
\label{as:strong_low_level}
\leavevmode\vspace{0.1\baselineskip}
\begin{enumerate}
\item (Precision independence) $\sigma_i \indep (\mu_i, Y_i)$ under $P_0$

	\item (Precision independence in first two moments) The distribution $P_0$ satisfies
	\[\E_{P_0}[\mu_i \mid \sigma_i] = \mu \text{ and } \var_
{P_0}
[\mu_i \mid \sigma_i] = \sigma_\mu^2, \numberthis \label{eq:moment_indep}\]
	\item (Linearity of the conditional expectation function and exogeneity of $\sigma$)
	$\E_ {P_0} [Y_i \mid \mu_i,
	\sigma_i] = \alpha_0 + \beta_0 \mu_i$. 
	
\end{enumerate}
\end{as}

\begin{restatable}{lemma}{lemmaimplication}
\label{lemma:implication}
Under \cref{basicassp}, \cref{as:strong_low_level}(1) implies \cref{as:strong_high_level};
\Cref{as:strong_low_level}(2) implies \cref{as:strong_high_level}(1), and
\cref{as:strong_low_level}(3) implies \cref{as:strong_low_level}(2).
\end{restatable}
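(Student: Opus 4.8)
The plan is to verify each implication by substituting the Gaussian structure and then conditioning away the noise. Write $X_i = \mu_i + \epsilon_i$, so that \cref{basicassp}(1) gives $\E[\epsilon_i \mid Y_i,\mu_i,\sigma_i]=0$ and $\E[\epsilon_i^2\mid Y_i,\mu_i,\sigma_i]=\sigma_i^2$, and express the oracle shrinkage as $\mu_i^* = \mu + w_i(\mu_i-\mu) + w_i\epsilon_i$ with the deterministic weight $w_i = \sigma_\mu^2/(\sigma_i^2+\sigma_\mu^2)$, a function of $\sigma_i$ alone. Every target moment then splits into a $\mu$-part and an $\epsilon$-part, and the $\epsilon$-part always vanishes after conditioning on $(Y_i,\mu_i,\sigma_i)$ and invoking $\E[\epsilon_i\mid\cdot]=0$; the work is to control the $\mu$-part under each hypothesis. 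A single algebraic identity, $w_i^2(\sigma_\mu^2+\sigma_i^2)=\sigma_\mu^2 w_i$, recurs throughout.

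For \cref{as:strong_low_level}(2) $\Rightarrow$ \cref{as:strong_high_level}(1), I would condition on $\sigma_i$ and use the two moment equalities to get $\E[(\mu_i-\mu)\mu_i\mid\sigma_i]=\sigma_\mu^2$ and $\E[(\mu_i-\mu)^2\mid\sigma_i]=\sigma_\mu^2$. This collapses $\cov(\mu_i^*,\mu_i)$ to $\sigma_\mu^2\,\E[w_i]$ and, after the cross term drops and the identity above combines $\sigma_\mu^2\,\E[w_i^2]$ with the noise contribution $\E[w_i^2\sigma_i^2]$, collapses $\var(\mu_i^*)$ to the same $\sigma_\mu^2\,\E[w_i]$; forecast unbiasedness then follows at once.

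For \cref{as:strong_low_level}(1) $\Rightarrow$ \cref{as:strong_high_level}, I would first note that $\sigma_i\indep(\mu_i,Y_i)$ implies the moment equalities of \cref{as:strong_low_level}(2), so \cref{as:strong_high_level}(1) is inherited from the previous step; it remains to establish exogeneity $\E[\eta_i\mu_i^*]=0$. Expanding into three terms, $\mu\,\E[\eta_i]=0$ together with the normal equations $\E[\eta_i]=\E[\eta_i\mu_i]=0$ that define $\beta_0$ dispatch the constant term; the $\epsilon$-term vanishes because $\eta_i$ is $(Y_i,\mu_i)$-measurable and $\E[\epsilon_i\mid Y_i,\mu_i,\sigma_i]=0$; and the remaining $\E[w_i\eta_i(\mu_i-\mu)]$ factors as $\E[w_i]\,\E[\eta_i(\mu_i-\mu)]=0$ using the independence of $\sigma_i$ from $(Y_i,\mu_i)$.

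Finally, \cref{as:strong_low_level}(3) restricts only the conditional law of $Y_i$ given $(\mu_i,\sigma_i)$, so the content it supplies is precisely the exogeneity piece \cref{as:strong_high_level}(2); I would rewrite it as $\E[\eta_i\mid\mu_i,\sigma_i]=0$, i.e. mean-independence of $\eta_i$ from $(\mu_i,\sigma_i)$. Since $w_i(\mu_i-\mu)$ is $(\mu_i,\sigma_i)$-measurable, the term $\E[w_i\eta_i(\mu_i-\mu)]$ vanishes by iterated expectations; the $\epsilon$-term vanishes by \cref{basicassp}(1) exactly as before; and the constant term is $\mu\,\E[\eta_i]=0$. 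Hence $\E[\eta_i\mu_i^*]=0$. The main obstacle is not any single step but the bookkeeping of which conditioning set kills which term; the only genuinely substantive computation is the weight identity used in the forecast-unbiasedness part, with everything else resting on the two moments of $\epsilon_i$ from \cref{basicassp}(1) and the normal equations for $\beta_0$.
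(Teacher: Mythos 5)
Your proof is correct and follows essentially the same route as the paper's: write $\mu_i^* = \mu + w(\sigma_i)(\mu_i-\mu) + w(\sigma_i)\,\epsilon_i$, kill every noise term by conditioning on $(Y_i,\mu_i,\sigma_i)$ via \cref{basicassp}(1), derive forecast unbiasedness from $\cov(\mu_i^*,\mu_i)=\var(\mu_i^*)=\sigma_\mu^2\,\E[w(\sigma_i)]$ using the same weight identity $w^2(\sigma_\mu^2+\sigma_i^2)=\sigma_\mu^2 w$ that underlies the paper's computation, and dispatch exogeneity under \cref{as:strong_low_level}(1) by factoring $\E[w(\sigma_i)]\,\E[\eta_i(\mu_i-\mu)]=0$ and under \cref{as:strong_low_level}(3) by iterated expectations with $\E[\eta_i\mid\mu_i,\sigma_i]=0$. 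One remark: for the third implication you prove $\text{\cref{as:strong_low_level}}(3)\Rightarrow\text{\cref{as:strong_high_level}}(2)$ rather than the literally stated $\text{\cref{as:strong_low_level}}(3)\Rightarrow\text{\cref{as:strong_low_level}}(2)$, but this matches what the paper's own proof establishes --- the stated target appears to be a typo, since, as you correctly observe, \cref{as:strong_low_level}(3) restricts only the conditional law of $Y_i$ given $(\mu_i,\sigma_i)$ and cannot constrain the conditional moments of $\mu_i$ given $\sigma_i$.
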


A strong sufficient condition for \cref{as:strong_high_level} is directly that $\sigma_i$
is completely independent of the joint distribution of $(Y_i, \mu_i)$. 

 \Cref{as:strong_low_level}(2)--(3) are in some respects weaker.
\Cref{as:strong_low_level}(2) imposes that $\sigma_i$ does not predict $\mu_i$ at least in
its first two conditional moments. This is a moment version of the precision independence
condition discussed in \citet{chen2023empirical}, who shows examples in which
\cref{as:strong_low_level}(2) fails to hold. This precision independence assumption alone
does not involve $Y_i$, and is thus insufficient to justify $\tilde\beta_0 = \beta_0$.

\Cref{as:strong_low_level}(3) imposes two strong assumptions on $P_0$. The first is that
the conditional expectation function of $Y_i$ given $\mu_i$ is in fact linear in $\mu_i$,
viewing $\beta_0$ instead as the slope of that linear relationship rather than the
population best linear approximation of $\E[Y_i \mid \mu_i]$. Second,
\cref{as:strong_low_level} imposes that $\sigma_i$ does not have additional predictive
power over $Y_i$ given $\mu_i$. This is the analogue of a precision independence
assumption for the outcome variable $Y_i$, and can fail due to similar reasons outlined in
\citet{chen2023empirical}. This assumption is rejected in our empirical application.

Since $\hat\beta$ is consistent for $\beta_0$ without imposing \cref{as:strong_high_level}
or \cref{as:strong_low_level} at all, it should be preferred over the estimator
$\tilde\beta$ on robustness grounds. Nevertheless, motivated by
\cref{as:strong_low_level}(2), we might wonder whether the defects of the
regress-on-shrinkage estimator $\tilde \beta$ is due to insufficiently flexible empirical
Bayes modeling. Perhaps the Normality assumption in \cref{basicassp} can be exploited to
yield efficiency benefits. The next subsection answers this question in the negative.
Specifically, we show that under \cref{basicassp}, not only is $\hat\beta$ a
semiparametrically efficient estimator for $\beta_0$, but all well-behaved consistent
estimators for $\beta_0$ must be asymptotically equivalent to $\hat\beta$. This provides
strong justification for $\hat\beta$ as the \emph{only} estimator for $\beta_0$
asymptotically.

\subsection{Efficiency and uniqueness of $\hat\beta$}

Let $\mathcal P_0$ be the set of distributions $P_0$ that satisfy
\cref{basicassp} as well as some technical conditions, stated in \cref{asec:eff}. Let
$\mathcal P$ be the set of distributions on the observed data $(Y_i, X_i, \sigma_i)$ that
is induced by members of $\mathcal P_0$. 

Following standard semiparametric theory
\citep{van2000asymptotic,tsiatis2006semiparametric}, we restrict our attention to
\emph{regular and asymptotically linear} (RAL) estimators, defined formally in
\cref{asec:eff}. RAL estimators are asymptotically Normal along all local perturbations of
$P_0$ within $\mathcal P_0$. Indeed, most $\sqrt{n}$-consistent and asymptotically Normal
estimators are RAL, and \emph{semiparametrically efficient} estimators are exactly the
optimal estimators \emph{among the class} of RAL estimators.

We also recall Definition 2.2 in \citet{chen2018overidentification} on local just
identification, again formally stated in \cref{asec:eff}. Local just identification is the
semiparametric analogue to just identification in parametric GMM models. In a just
identified GMM model, there are no additional moment conditions to exploit, and all GMM
weightings yield the same estimator. Exactly analogously, in just identified
semiparametric models, there are no (local) testable restrictions to exploit. Just like
all weighted GMM estimators are exactly the same under just identification, under local
just identification, all RAL estimators are asymptotically equivalent. They have the same
asymptotic distribution, and they are all (vacuously) semiparametrically efficient.

The next theorem shows that many members in $\mathcal P$ are locally just identified by
$\mathcal P$. The results of \citet{chen2018overidentification} then imply that all RAL
estimators for the regression coefficient $\beta_0$ are asymptotically equivalent to each
other and to $\hat\beta$, which we also verify in the following theorem. This also implies
that $\hat\beta$ is semiparametrically efficient, though in a vacuous sense.

\begin{restatable}{theorem}{thmeff}
\label{thm:eff}
Let $P \in \mathcal P$ and let $P_0 \in \mathcal P_0$ be the corresponding distribution
over the complete data, defined formally in \cref{asec:eff}. %
Then $P$ is
locally just
identified by $\mathcal P$. As a result,
\begin{enumerate}
	\item Any RAL estimator $\check\beta$ for
\[\beta_0 \equiv \beta_0(P) \equiv \frac{\cov_P(X,Y)}{\var_P(X) - \E_P[\sigma^2]}\] is
asymptotically
equivalent to the analogue estimator $\hat\beta$: $
\sqrt{n} (\check\beta - \hat\beta) = o_P(1).
$
\item The semiparametric efficiency bound for $\beta_0(P)$ is equal to the asymptotic
variance of $\hat\beta$ at $P$. 
\end{enumerate}

\end{restatable}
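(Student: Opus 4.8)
The plan is to establish local just identification directly from the latent-variable structure and then invoke \citet{chen2018overidentification} for all three conclusions. The starting point is to characterize the tangent space $\Tspace_P$ of the observed-data model $\mathcal P$ at $P$. Because \cref{basicassp}(1) fixes the conditional law $X \mid (Y,\mu,\sigma) \sim \Norm(\mu,\sigma^2)$ with no free parameters, the only freedom in the complete-data model $\mathcal P_0$ lies in the joint law $Q$ of $(Y,\mu,\sigma)$; hence the complete-data tangent space is the full $L^2_0(Q)$, consisting of all mean-zero $s(Y,\mu,\sigma)$. Since $(Y,X,\sigma)$ is a coarsening of the complete data, the observed-data scores are the conditional expectations $\E[s(Y,\mu,\sigma)\mid Y,X,\sigma]$, so that $\Tspace_P$ is the $L^2(P)$-closure of $\{\E[s\mid Y,X,\sigma]: s\in L^2_0(Q)\}$.

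The heart of the argument is to show $\Tspace_P = L^2_0(P)$, i.e.\ that the model imposes no local overidentifying restrictions---a claim that is perhaps surprising, since the global model is genuinely restricted (not every $(Y,X,\sigma)$-law admits a Gaussian deconvolution). I would argue by computing the orthogonal complement. Fix $r\in L^2_0(P)$ orthogonal to $\Tspace_P$. Since $r$ is measurable with respect to $(Y,X,\sigma)$, iterated expectations give $\E[r\,\E[s\mid Y,X,\sigma]] = \E[rs] = \E[s\,\E[r\mid Y,\mu,\sigma]]$, so orthogonality to every mean-zero $s$ forces $\E[r(Y,X,\sigma)\mid Y,\mu,\sigma] = 0$ almost surely. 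Using $X\indep Y\mid(\mu,\sigma)$ and Normality, this conditional expectation equals $\int r(Y,x,\sigma)\,\phi_\sigma(x-\mu)\,dx$, where $\phi_\sigma$ is the $\Norm(0,\sigma^2)$ density. For fixed $(Y,\sigma)$, the map $\mu\mapsto\int r(Y,x,\sigma)\phi_\sigma(x-\mu)\,dx$ is real-analytic, so its vanishing on the support of $\mu\mid(Y,\sigma)$---which I would require to have an accumulation point as one of the technical conditions defining $\mathcal P_0$---forces $r(Y,\cdot,\sigma)*\phi_\sigma\equiv 0$. Because the Gaussian characteristic function never vanishes, Gaussian convolution is injective, whence $r(Y,\cdot,\sigma)=0$; thus $r=0$ and $\Tspace_P = L^2_0(P)$. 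This is exactly local just identification in the sense of \citet[Definition 2.2]{chen2018overidentification}.

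Given local just identification, parts (1) and (3) follow from \citet{chen2018overidentification}: when the tangent space is the full nonparametric tangent space, $\Tspace_P^\perp=\{0\}$, so the set of gradients of the pathwise-differentiable functional $\beta_0(\cdot)$ is a singleton, forcing every RAL estimator to share the same influence function and hence be mutually asymptotically equivalent and efficient. To pin this common influence function to $\hat\beta$ specifically, I would separately verify that $\hat\beta$ is RAL: expanding \eqref{eq:inf_est} by the delta method around the population moments $(\E[X],\E[Y],\cov(Y,X),\var(X),\E[\sigma^2])$ yields an asymptotically linear representation with explicit influence function $\psi$, and regularity follows since $\hat\beta$ is a smooth function of regular sample moments. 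Its asymptotic variance $\E[\psi^2]$ is then the efficiency bound, giving part (3), and $\sqrt n(\check\beta-\hat\beta)=o_P(1)$ for any RAL $\check\beta$, giving part (2).

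The main obstacle is the tangent-space step: making rigorous both the claim that observed-data scores are exactly conditional expectations of complete-data scores (which requires exhibiting valid bounded parametric submodels of $\mathcal P$ realizing these directions and handling $L^2$-closures carefully) and the deconvolution injectivity under minimal support and integrability assumptions on $\mu\mid(Y,\sigma)$. These regularity requirements are precisely what the technical conditions defining $\mathcal P_0$ in \cref{asec:eff} should encode; by comparison, the delta-method verification that $\hat\beta$ is RAL is routine.
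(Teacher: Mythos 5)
Your proposal is correct and follows essentially the same route as the paper: both treat the observed-data scores as conditional expectations of unrestricted complete-data scores, reduce local just identification to triviality of the kernel of the adjoint score operator (i.e., $\E[r(Y,X,\sigma)\mid Y,\mu,\sigma]=0$ forces $r=0$), and then conclude via Theorem 3.1(i) of \citet{chen2018overidentification} together with the fact that $\hat\beta$ is itself RAL with influence function lying in the full tangent space $L^2_0(P)$. The only difference is cosmetic: where you prove the injectivity step by hand (real-analyticity in $\mu$ of the Gaussian convolution, vanishing on a set with an accumulation point, and nonvanishing of the Gaussian characteristic function), the paper cites the same fact as completeness of the Gaussian location family (Theorem 4.3.1 of \citealp{lehmann1986testing}), under its assumption that the support of $\mu \mid Y,\sigma$ contains an interval.
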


\Cref{thm:eff}, an application of Example 25.35 in \citet{van2000asymptotic}, rules out
efficiency gains from alternative estimators under the minimal assumptions
\cref{basicassp}. As a result, $\hat\beta$ is strongly justified for estimating $\beta$,
since it is in fact the unique estimator for $\beta$ in an asymptotic sense. The
implications of \cref{thm:eff} are not limited to the regression coefficient $\beta_0$;
indeed, any regular parameter of $P$ admits unique RAL estimators in the sense of
\cref{thm:eff}(1).

Imposing stronger assumptions than \cref{basicassp} amounts to shrinking the model
$\mathcal P$. Doing so would generally result in testable overidentification restrictions
and weakly decrease the efficiency bound. Under \cref{as:strong_low_level}(1) or similar
assumptions, for instance, the estimator $\tilde \beta$ is indeed more efficient than
$\hat\beta$ \citep{sullivan2001note}, though yet more efficient estimators exist
\citep{bickel1987efficient}. As discussed, \cref{as:strong_low_level} is much stronger
than \cref{basicassp}.

One assumption that is arguably reasonable is the conditional Normality structure of
$Y_i$, given that $i$ is a teacher and sometimes $Y_i$ is a sample mean of student
outcomes at the teacher level. In particular, we might consider imposing $Y_i \mid
\theta_i, X_i, \mu_i, \sigma_i, \nu_i \sim \Norm(\theta_i, \nu_i^2)$ for some unknown
$\theta_i$ and known $\nu_i^2$, and assume that $(Y_i, \theta_i, \nu_i, X_i, \mu_i,
\sigma_i) \iid P_0^*$ under the additional conditional Normality assumption. Simple
modification to the proof of \cref{thm:eff} shows that such a restriction does not
alter the conclusion of \cref{thm:eff}.

\subsection{Implementation} \label{sub:implement}We close this section with a discussion
of implementing
$\hat\beta$ in richer environments. First, given teacher-level data $(Y_i, X_i, Z_i,
\sigma_i)$, we may want to include covariates $Z_i$ in the infeasible regression: \[ Y_i =
\alpha_0 + \beta_0 \mu_i + \gamma_0' Z_i + \epsilon_i.
\]
Suppose $Z_i$ does not predict the noise component of $X_i$, then by
Frisch--Waugh--Lovell, \[
	Y_i = \beta_0 (\mu_i - \proj(\mu_i \mid Z_i)) + \epsilon_i
\]
and $X_i - \proj(X_i \mid Z_i) \sim \Norm(\mu_i - \proj(\mu_i \mid Z_i), \sigma_i^2)$,
where $\proj(\cdot \mid Z_i)$ is the population linear projection of a random variable
onto $Z_i$. 
Thus, our analysis above applies to the partialled out teacher effects $\mu_i- \proj(\mu_i \mid Z_i)$ and its
measurement $X_i - \proj(X_i \mid Z_i)$. 

More generally, it is common practice to consider a student-level regression. For $j$ a
student associated with teacher $i$, we consider the infeasible regression \[ Y_{ij} =
\alpha_0 +
\beta_0 \mu_j + \gamma_0'Z_{ij} + \epsilon_{ij}
\numberthis
\label{eq:infeasible_disagg}
\]
where $X_{ij}$ is unbiased for $\mu_j$.\footnote{One plausible and precise  sampling
process is as
follows. Suppose \[
	(N_i, \mu_i, (Y_{ij}, Z_{ij}, X_{ij})_{j=1}^{N_i}) \iid P_0
\]
and $P_0$ is such that $\E_{P_0}[X_{ij} \mid \mu_i, N_i] = \mu_i$. We also assume that
conditional on $\mu_i, N_i$, the student-level variables $(Y_{ij}, Z_{ij}, X_{ij})_{j=1}^
{N_i}$ are uncorrelated across $j$. 

To connect the aggregate and disaggregated setups, suppose $Z_{ij} = Z_i$ is not a
function of the student. Then the population OLS regression  \eqref{eq:infeasible_disagg}
is
equivalent to the teacher-level weighted least squares regression \[
	Y_i = \alpha_0 + \beta_0 \mu_j + \gamma_0' Z_i + \epsilon_{ij}
\]
where $Y_i = \frac{1}{N_i} \sum_{j} Y_{ij}$ and observations are weighted by $N_i$. 
} Here, we do not assume $X_ {ij}, Y_ {ij}, Z_ {ij}$
are uncorrelated within a teacher $i$, and so $Y,Z$ may predict the noise component of
$X$. This infeasible regression coefficient $\theta_0 = (\alpha_0, \beta_0, \gamma_0')'$
can be written as \[
	\theta_0 = \E_{P_0}\bk{
		\sum_{j=1}^{N_i} W_{ij} W_{ij}'
	}^{-1} \E_{P_0}\bk{\sum_{j=1}^{N_i} W_{ij} Y_{ij}}.
\]
The quantities $\E_{P_0}\bk{
		\sum_{j=1}^{N_i} W_{ij} W_{ij}'
	}$, $\E_{P_0}\bk{\sum_{j=1}^{N_i} W_{ij} Y_{ij}}$ involves infeasible terms such as
	$\E \sum_{j} \mu_i^2$, $\E \sum_j \mu_i Z_{ij}$, $\E\sum_j \mu_iY_{ij}$. Fortunately,
	they can be similarly estimated from $X_{ij}$: For instance, for $ X_i = \frac{1}
	{N_i} \sum_{j=1}^{N_i} X_{ij}$, \[
		\E\bk{\sum_{j=1}^{N_i} \mu_i Z_{ij}} = \E\bk{
			\sum_{j=1}^{N_i}  X_i Z_{ij}} - \E\bk{\underbrace{\frac{1}{N_{i}-1} \sum_{j=1}^{N_i}
			Z_ {ij} (X_
						{ij}
						- 
						X_i)}_{\text{Empirical covariance between $X_{ij}$ and $Z_{ij}$}}
		}.
	\]
	Thus, we may use $\sum_{j=1}^{N_i}  X_i Z_{ij} - \frac{1}{N_{i}-1} \sum_{j=1}^{N_i}
			Z_ {ij} (X_
						{ij}
						- 
						X_i)$ to substitute for $\sum_j \mu_i Z_{ij}$, and similarly for other terms
						in $\theta_0$. This construction is very similar to $\hat\beta$, since it
						essentially debiases an empirical moment with $X_i$ by adjusting for the
						impact of second moments. In a slightly different context, recent work by
						\citet{de2024estimating} proposes this estimator as well.

	Even more
	conveniently,
	it turns out that such an analogue can be simply implemented by instrumenting for $X_
	{ij}$ in following regression with a \emph{leave-one-out instrument} $X_{i,-j} =
	\frac{1}
	{N_i-1}
	\sum_{k\neq j} X_{ik}$
	\[
		Y_{ij} = \alpha_0 + \delta_0 X_{ij} + \gamma_0'Z_{ij} + \epsilon_{ij}.
	\]
\citet{devereux2007improved} shows\footnote{We thank Patrick Kline for this reference.}
that these two estimates are in fact numerically
	equivalent. This equivalence means that researchers can conveniently implement the
	measurement error correction by constructing the leave-one-out instrument and use
	off-the-shelf routines, without other
	multistep procedures.

\section{Regression coefficients involving nonlinear functions of $\mu_i$}
\label{sec:nonlinear}

Sometimes empirical researchers are interested in the projection coefficient on some known
nonlinear function $f(\cdot)$ of the latent quantities $\mu_i$:
\[
Y_i = \rho_0 + \tau_0 f(\mu_i) + \eta_i.
\]
For instance, we might be interested in how being a ``good teacher'' predicts outcomes,
and being a good teacher is defined as $\one(\mu_i > \mu_0)$ for some threshold $\mu_0$.
For an example of setting where $\beta_{0}$ is of interest under such specification, see
Section 3.2.2 of \cite{bruhn2022regulatory}.

Unfortunately, estimating $\tau_0$ involves some unpleasanat tradeoffs for the analyst, as
we no longer have access to a simple estimator like $\hat\beta$. On the one hand,
imposing a strong assumption does allow us to recover $\tau_0$ by regressing $Y_i$ on
the \emph{correctly specified} parametric empirical Bayes posterior means for $f
(\mu_i)$. The downside of this approach is that the assumptions involved can be quite
strong, and are unlikely to be justified. On the other hand, without these assumptions,
estimating $\tau_0$ is fundamentally difficult. This difficulty is in the sense that the
the error of the best possible estimator contracts at subpolynomial rates in the sample
size, meaning that reducing uncertainty requires exponentially large sample sizes.

We begin with the first approach and document a simple estimator under strong assumptions
on model specification. 

\begin{as}[Correct specification of outcome model]
\label{as:nonlinear_strong}
 $\E[Y_i \mid \mu_i, \sigma_i] = \rho_0 +
	\tau_0 f (\mu_i)$. Equivalently, $\E[\eta_i \mid \mu_i, \sigma_i] = 0 .$
\end{as}

\begin{prop}\label{prop:nonlinear_consistent}
Under \cref{as:nonlinear_strong,basicassp}, 
the population coefficient $\tau_0$ is equal to the regression coefficient of $Y_i$ on the
\emph{correctly specified empirical Bayes posterior mean} of $f(\mu_i)$ on $X_i, \sigma_i$
\[
\tau_0 = \frac{\cov(Y_i, \E[f(\mu_i) \mid X_i, \sigma_i])}{\var(\E[f(\mu_i) \mid
X_i, \sigma_i])}. \numberthis \label{eq:nonlinear_identi}
\]
\end{prop}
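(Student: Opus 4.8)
The plan is to reduce the claimed identity to showing $\cov(Y_i, g_i) = \tau_0 \var(g_i)$, where I abbreviate $g_i \equiv \E[f(\mu_i) \mid X_i, \sigma_i]$ for the correctly specified posterior mean; dividing through by $\var(g_i)$ then yields \eqref{eq:nonlinear_identi}. Substituting the outcome equation $Y_i = \rho_0 + \tau_0 f(\mu_i) + \eta_i$ and using bilinearity of covariance gives
\[
\cov(Y_i, g_i) = \tau_0 \cov(f(\mu_i), g_i) + \cov(\eta_i, g_i),
\]
so it suffices to establish two facts: a forecast-unbiasedness identity $\cov(f(\mu_i), g_i) = \var(g_i)$, and an orthogonality condition $\cov(\eta_i, g_i) = 0$.

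The first fact follows purely from the tower property. Since $g_i$ is $(X_i, \sigma_i)$-measurable and equals $\E[f(\mu_i) \mid X_i, \sigma_i]$, conditioning on $(X_i, \sigma_i)$ gives $\E[f(\mu_i) g_i] = \E[g_i^2]$ and $\E[f(\mu_i)] = \E[g_i]$, so that $\cov(f(\mu_i), g_i) = \E[g_i^2] - \E[g_i]^2 = \var(g_i)$. This step uses only that $g_i$ is the \emph{correctly specified} posterior mean and imposes no further restriction; it is the automatic analogue of the forecast-unbiasedness condition \cref{as:strong_high_level}(1) from the linear case.

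The orthogonality condition is where the real work lies and where both \cref{as:nonlinear_strong} and \cref{basicassp}(1) enter. The key observation is that \cref{basicassp}(1) delivers the conditional independence $X_i \indep Y_i \mid (\mu_i, \sigma_i)$; since $\eta_i = Y_i - \rho_0 - \tau_0 f(\mu_i)$ is a function of $(Y_i, \mu_i)$ while $g_i$ is a function of $(X_i, \sigma_i)$, this upgrades to $\eta_i \indep g_i \mid (\mu_i, \sigma_i)$. Conditioning on $(\mu_i, \sigma_i)$ then factorizes the cross-moment,
\[
\E[\eta_i g_i \mid \mu_i, \sigma_i] = \E[\eta_i \mid \mu_i, \sigma_i]\, \E[g_i \mid \mu_i, \sigma_i] = 0,
\]
where the last equality invokes the correct-specification assumption $\E[\eta_i \mid \mu_i, \sigma_i] = 0$ of \cref{as:nonlinear_strong}; the same conditioning shows $\E[\eta_i] = 0$, so $\cov(\eta_i, g_i) = 0$. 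Combining the two facts gives $\cov(Y_i, g_i) = \tau_0 \var(g_i)$, and dividing through---implicitly assuming $\var(g_i) > 0$, i.e.\ that $f(\mu_i)$ remains nondegenerate after denoising---completes the argument. I expect this orthogonality step to be the main obstacle: it does \emph{not} follow from any projection definition of $\tau_0$, but genuinely relies on the Gaussian measurement structure to decouple the measurement noise entering $g_i$ from the outcome residual $\eta_i$, sharpening the role that \cref{as:strong_high_level}(2) played in the linear analysis.
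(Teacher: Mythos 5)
Your proof is correct and takes essentially the same route as the paper's: both use the outcome equation to decompose $Y_i$ and show, via iterated expectations, that the residual pieces $\tau_0\bigl(f(\mu_i) - \E[f(\mu_i)\mid X_i,\sigma_i]\bigr)$ and $\eta_i$ are orthogonal to the posterior mean $\E[f(\mu_i)\mid X_i,\sigma_i]$. If anything, you are more explicit than the paper, whose proof asserts in one line that \cref{as:nonlinear_strong} yields $\E[\eta_i \mid X_i, \sigma_i] = 0$; your conditional-independence argument through \cref{basicassp}(1) is precisely the justification that assertion leaves implicit.
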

\begin{proof}
We can write \[
Y _i = \rho_0 + \tau_0 \E[f(\mu_i) \mid X_i, \sigma_i] + \tau_0 (f(\mu_i) - \E
[f(\mu_i) \mid X_i, \sigma_i]) + \eta_i. 
\]
Under \cref{as:nonlinear_strong}, $\E[\eta_i \mid X_i, \sigma_i] = 0$. By law of iterated
expectations, \[
\E[\tau_0 (f(\mu_i) - \E
[f(\mu_i) \mid X_i, \sigma_i]) \mid X_i, \sigma_i] = 0.
\]
Therefore, $\tau_0$ is equal to the population regression coefficient of $Y_i$ on
$\E_{G_i}[f(\mu_i) \mid X_i, \sigma_i]$.
\end{proof}

Operationizing \eqref{eq:nonlinear_identi} requires us to estimate $\E[f(\mu_i) \mid X_i,
\sigma_i]$ (importantly, distinct from $f(\E[\mu_i \mid X_i, \sigma_i])$). Traditional
empirical Bayes methods often specify a parametric model, e.g., $\mu_i
\mid
\sigma_i \sim \Norm (\mu,
\sigma_\mu^2)$. Regressing $Y_i$ on the estimated empirical Bayes posterior means under
such models can be interpreted as estimating \eqref{eq:nonlinear_identi}. Nevertheless,
the price of such an interpretation is the strong assumptions imposed:
\cref{as:nonlinear_strong} and well-specified parametric prior.

Without these strong assumptions, $\tau_0$ remains identified, as it is a function of the
joint distribution of $(Y_i, \mu_i)$. This joint distribution is identified from the joint
distribution of $(Y_i, X_i, \sigma_i)$ via deconvolution.  However, as is typical with
deconvolution, the problem of estimating $\tau_0$ becomes prohibitively difficult without
making parametric restrictions \citep{fan1993nonparametric}. We illustrate this point with
a minimax lower bound for $\tau_0$. 

Let us first define the class of distributions the lower bound is over. Fix some function
$f: [-1,1] \to \R$ such that it is bounded ($\norm{f}_\infty < \infty$) and nonconstant in
the sense that $V(f) \equiv \var_{U \sim \Unif[-1,1]}(f(U)) > 0$. Let $\mathcal Q_0$
collect all distributions $Q_0$ for $(Y_i ,\mu_i, X_i,
\sigma_i)$ satisfying \cref{basicassp} where (a) $\mu_i$ and $Y_i$ are supported within
the interval $ [-1,1]$, (b) $\var_{Q_0}(f(\mu_i)) > \frac{1}{2} V(f)$ to avoid degeneracy,
and (c) $\sigma_i \le 1$.\footnote{The restrictions made on $\mathcal Q_0$ is for
convenience. Note that the minimax rate over a larger set of distributions is necessarily
bounded below by the minimax rate over $\mathcal Q_0$. } Let $\mathcal Q = \br{P^\obs
(P_0) : P_0 \in \mathcal Q_0}$. Relative to $f$, let $\tau_0(Q_0)
=
\frac{\cov_{Q_0}(Y, f(\mu))}{\var_{Q_0}(f(\mu))}$. The \emph{minimax risk} of estimating $\tau_0
(Q_0)$ is the worst-case squared error of a given estimator $T$ over $\mathcal Q_0$, \[
R_n(\mathcal Q_0, f) = \inf_T \sup_{Q_0 \in \mathcal Q_0} \E_{Q_0}\bk{
	\br{T(Y_{1:n}, X_{1:n}, \sigma_{1:n}) - \tau_0(Q_0)}^2
},
\]
optimized over choices of all estimators. 

The minimax rate measures the difficulty of estimating $\tau_0(Q_0)$ over $Q_0$. One way
to interpret $R_n$ is as the value of a zero-sum game where an analyst moves first and
chooses and estimator $T$, and an adversary moves second and chooses a distribution $Q_0
\in \mathcal Q_0$. If different distributions $Q_0, Q_1 \in \mathcal Q_0$ produce very
different $\tau_0$ but very similar data, then the analyst would suffer large losses as
they would be unable to distinguish these scenarios. 

For well-behaved estimands (e.g. when $f(\mu) =
\mu$), $R_n = O(1/\sqrt{n})$ contracts at the familiar parametric rate.\footnote{To wit,
 the restrictions on $\mathcal Q_0$ implies that $\tau_0(Q_0)$ is bounded by some $M >
 0$. Thus, we can consider the estimator $T = \max(\min(\hat\beta, M), -M)$. The
 truncation at $M$ is so that expectations always exist. } In sharp contrast, when $f
 (\cdot)$ is not an analytic function, the minimax rate vanishes slower than any
 polynomial in $n$. To reduce the uncertainty in our estimates proportionally by $t \in 
 (0,1)$, we
 therefore would require sample sizes exponential in $1/t$. 

\begin{restatable}{theorem}{thmminimax}
\label{thm:minimax}
Under this setup, if $f$ is not an analytic function,\footnote{A real-valued function $f:
[-1,1] \to \R$ is \emph{analytic} if there exists an extension of $f$ on an open subset of
$\C$, $\tilde f : U \to \C$ where $U\subset \C$ is open, such that $\tilde f = f$ on $
[-1,1]$ and $\tilde f$ is complex analytic (i.e. complex differentiable). } then for any
$\alpha > 0$, $R_n$ contracts at a rate slower than $n^{-\alpha}$: $
\limsup_{n\to\infty} n^{\alpha} R_n(\mathcal Q_0, f) = \infty. 
$
\end{restatable}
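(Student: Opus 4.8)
The plan is to prove the lower bound by Le Cam's two-point method, reducing the statement to the classical hardness of Gaussian deconvolution together with the fact --- due to Bernstein --- that analyticity on $[-1,1]$ is equivalent to geometric decay of the best polynomial-approximation error. Write $e_k \equiv \inf_{\deg \Pi_k \le k}\norm{f - \Pi_k}_{L^2[-1,1]}$ for the degree-$k$ Legendre approximation error. Bernstein's theorem gives that $f$ is analytic (in the sense of the footnote) if and only if $\limsup_k e_k^{1/k} < 1$; hence non-analyticity is exactly the statement that $\liminf_k (-k^{-1}\log e_k) = 0$, so there is a subsequence $k_j \to \infty$ with $e_{k_j} \ge e^{-\epsilon_j k_j}$ and $\epsilon_j \to 0$. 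This is the only place non-analyticity enters. The remainder is a moment-matching construction of the kind used for nonsmooth functional estimation \citep{cai2011testing}: we estimate the functional $\int f \, dG$ of the latent law $G$ of $\mu$ seen only through its Gaussian convolution.

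For each $k$ I would build two laws $Q_0^{(k)}, Q_1^{(k)} \in \mathcal Q_0$ that agree on the marginal of $(\mu,\sigma)$ and differ only through $Y$. Take $\mu_i \sim \Unif[-1,1]$, $\sigma_i \equiv \sigma_\ast \in (0,1]$ fixed, and $Y_i \in \{-1,1\}$. Under $Q_0^{(k)}$ let $Y_i \indep \mu_i$ with $\Pr(Y_i = 1) = \tfrac12$; under $Q_1^{(k)}$ let $\Pr(Y_i = 1 \mid \mu_i) = \tfrac12 + \epsilon\, h(\mu_i)$, where $h = f - \Pi_k$ is the residual of the degree-$k$ Legendre projection. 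By orthogonality $\int_{-1}^1 \mu^j h(\mu)\,d\mu = 0$ for $j = 0,\dots,k$ and $\int_{-1}^1 f h = \norm{f-\Pi_k}_{L^2}^2 = e_k^2$. Because the marginal of $\mu$ is unchanged, both laws obey \cref{basicassp} and the constraints of $\mathcal Q_0$ (in particular $\var(f(\mu)) = V(f) > \tfrac12 V(f)$, $Y_i \in [-1,1]$, and $X_i \indep Y_i \mid \mu_i$), provided $\epsilon\norm{h}_\infty \le \tfrac12$; since the projection Lebesgue constant grows like $\sqrt k$, this holds with $\epsilon \asymp k^{-1/2}$. The parameter separation is $|\tau_0(Q_1^{(k)}) - \tau_0(Q_0^{(k)})| = \epsilon\, e_k^2 / V(f)$.

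The observed laws are, by contrast, super-exponentially close because of the vanishing moments. The density difference of $(Y_i, X_i)$ is proportional to $\epsilon\,(g \ast \phi_{\sigma_\ast})$ with $g = \tfrac12 h \one_{[-1,1]}$; Taylor-expanding $\phi_{\sigma_\ast}$ and using that the first $k+1$ moments of $g$ vanish gives $\norm{g \ast \phi_{\sigma_\ast}}_2^2 \le \delta_k^2$ with $\delta_k^2 \le C^k/k!$ for $C = C(\sigma_\ast)$. Hence the single-observation $\chi^2$-divergence is $O(\epsilon^2\delta_k^2)$, and by tensorization and Pinsker the total variation between the $n$-fold products stays below $\tfrac12$ whenever $n\epsilon^2\delta_k^2 \lesssim 1$; Le Cam's bound then gives $R_n \gtrsim (\epsilon e_k^2)^2$ under this budget. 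Since $\delta_k^2 \le C^k/k!$, the budget holds with $\epsilon \asymp k^{-1/2}$ as soon as $k! \gtrsim n$, i.e. for $k \ge k^\ast(n)$ with $k^\ast(n) \asymp \log n / \log\log n$. Taking $k = k^\ast(n)$ yields $R_n \gtrsim e_{k^\ast(n)}^4 / k^\ast(n)$. Finally, evaluating along the subsequence from non-analyticity --- choosing $n_j$ with $k^\ast(n_j) = k_j$, so $\log n_j \asymp k_j \log k_j$ --- gives $n_j^\alpha R_{n_j} \gtrsim n_j^\alpha e_{k_j}^4 / k_j \ge \exp\{\alpha\log n_j - 4\epsilon_j k_j - \log k_j\} \asymp \exp\{k_j(\alpha\log k_j - 4\epsilon_j) - \log k_j\}$, which diverges since $\log k_j \to \infty$ while $\epsilon_j \to 0$. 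As $\alpha > 0$ was arbitrary, $\limsup_n n^\alpha R_n = \infty$ for every $\alpha$.

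I expect the main obstacle to be the moment-matching estimate $\norm{g \ast \phi_{\sigma_\ast}}_2^2 \le C^k/k!$ and the attendant $\chi^2$ control: the Taylor remainder must be bounded using sharp growth rates for the Gaussian derivatives (Hermite functions), and one must check that dividing by the smoothed-uniform density $p_\mu \ast \phi_{\sigma_\ast}$ in the $\chi^2$ integral does not spoil the bound in the tails. A secondary nuisance is keeping each perturbation a valid member of $\mathcal Q_0$ uniformly in $k$ (the $\norm{h}_\infty \lesssim \sqrt k$ control and hence $\epsilon \asymp k^{-1/2}$), but this costs only polylogarithmic factors and leaves the sub-polynomial conclusion intact.
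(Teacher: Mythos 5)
Your proposal reaches the theorem by a route that is genuinely different from the paper's, and its skeleton is sound. Both arguments are Le Cam two-point bounds that hide a degree-$k$ moment-vanishing perturbation under Gaussian convolution and then invoke Bernstein's characterization of analyticity through geometric decay of polynomial approximation errors, finishing with a subsequence argument at $k \asymp \log n$ (yours: $\log n/\log\log n$). The difference is where the perturbation lives. The paper fixes $Y \sim \mathrm{Rademacher}(1/2)$ and perturbs the conditional law of $\mu$ given $Y$, using a pair of moment-matching priors $G_{-1},G_1$ whose existence, post-convolution $\chi^2$ proximity, and separation $|\E_{G_1}[f]-\E_{G_{-1}}[f]| \ge c\,E_k(f)$ (sup-norm error) are all imported from Wu--Yang's Theorem 3.3.3 and the LP duality between moment matching and best uniform approximation. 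You instead fix the law of $\mu$ (uniform) and perturb $Y$ given $\mu$ in the direction of the Legendre residual $h = f - \Pi_k f$: orthogonality of the residual replaces the duality step, the separation becomes $\epsilon e_k^2$ in the $L^2$ error, and your two observed laws share the same $X$-marginal, differing only through the $(Y,X)$ dependence. What your route buys is a more self-contained construction (moment vanishing is automatic, no appeal to LP duality); what it costs is that you must prove the $\chi^2$ bound yourself, since your numerator is a convolved signed measure $g\ast\phi_{\sigma_*}$ rather than a difference of the moment-matched mixtures covered by the theorem the paper cites. The polynomial-in-$k$ losses you incur ($\epsilon \asymp k^{-1/2}$, separation $e_k^2$ rather than $E_k(f)$) are immaterial on the logarithmic scale the sub-polynomial conclusion requires, and your final contradiction step matches the paper's in substance.

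Two points of caution on the step you yourself flagged as the crux. First, the tail control in $\int (g\ast\phi_{\sigma_*})^2/(p_\mu\ast\phi_{\sigma_*})$ cannot be closed by the naive pointwise Taylor/Cram\'er route you sketch: Cram\'er's inequality $|\phi^{(m)}(v)| \lesssim \sqrt{m!}\,e^{-v^2/4}$ makes the squared numerator decay like $e^{-(|x|-1)^2/(2\sigma_*^2)}$, the same Gaussian rate as the denominator's lower bound $\gtrsim e^{-(|x|+1)^2/(2\sigma_*^2)}$, so the ratio grows like $e^{2|x|/\sigma_*^2}$ and the tail integral diverges. The fix is the Hermite generating-function computation underlying the paper's cited result: write $(g\ast\phi_{\sigma_*})(x) = \phi_{\sigma_*}(x)\sum_{m>k} \frac{\Delta_m}{m!\,\sigma_*^m} He_m(x/\sigma_*)$ with $\Delta_m = \int u^m g(u)\,du$, lower-bound the denominator by $c\,\phi_{\sigma_*}(x) e^{-(2|x|+1)/(2\sigma_*^2)}$, and integrate using Cauchy--Schwarz across $m$ together with Mehler's formula (or $\int He_m^2 \phi = m!$ with a geometric weight); this keeps the Gaussian weight intact and delivers $\delta_k^2 \lesssim C^k/k!$ as you claimed. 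Second, a citation-level gap: Bernstein's theorem (the paper's Theorem 8.1 reference in DeVore--Lorentz) is stated for the uniform error $E_k(f)$, not your $L^2$ error $e_k$; you need the direction ``geometric $e_k$ decay implies analyticity,'' which follows by combining a Nikolskii inequality ($\norm{p}_\infty \lesssim \mathrm{poly}(k)\norm{p}_{L^2}$ for degree-$k$ polynomials on $[-1,1]$) with a telescoping of Legendre projections to convert geometric $L^2$ decay into geometric sup-norm decay. Both repairs are standard, so with them in place your proof is complete.
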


Analytic functions are infinitely differentiable and admit Taylor expansions everywhere.
However, many---if not most---transformations of interest are not analytic as they are
typically non-smooth. For these functions, \cref{thm:minimax} is then a negative result,
in the sense that the regression coefficient $\tau_0$ associated with them is
fundamentally difficult to estimate without assuming further structure.

The proof to \cref{thm:minimax} uses Le Cam's two-point method by constructing $Q_1, Q_0
\in \mathcal Q_0$ that generate similar data but very different $\tau_0(Q)$. A key step of
 the proof constructs $Q_1, Q_0$ so as to reduce the problem of estimating the regression
 coefficient $\tau_0$ to the problem of estimating a mean $\E_Q[f(\mu)]$. Minimax rates
 for the latter problem then follow from the techniques developed by \citet
 {cai2011testing} and presented in \citet{wu2020polynomial}.\footnote{\citet
 {cai2011testing} specifically study estimating $\E|\mu|$. Their proof technique extends
 to $\E[f(\mu)]$ by applying some results in approximation theory known as Bernstein's
 and Jackson's theorems. We have not seen these results stated in the statistical
 literature. Our proof of \cref{thm:minimax} makes it precise. } The proof to \cref
 {thm:minimax} similarly applies to regression problems where $f(\mu)$ appears on the
 left-hand side. That is, the minimax rate for the population regression coefficient
 $\tau_0$ in the infeasible specification\[ f(\mu_i) = \rho_0 + \tau_0 W_i + \eta_i
\]
suffers from similar subpolynomial rates if $f$ is non-analytic.

\begin{figure}[!th]
  \centering
  \includegraphics[width=\textwidth]{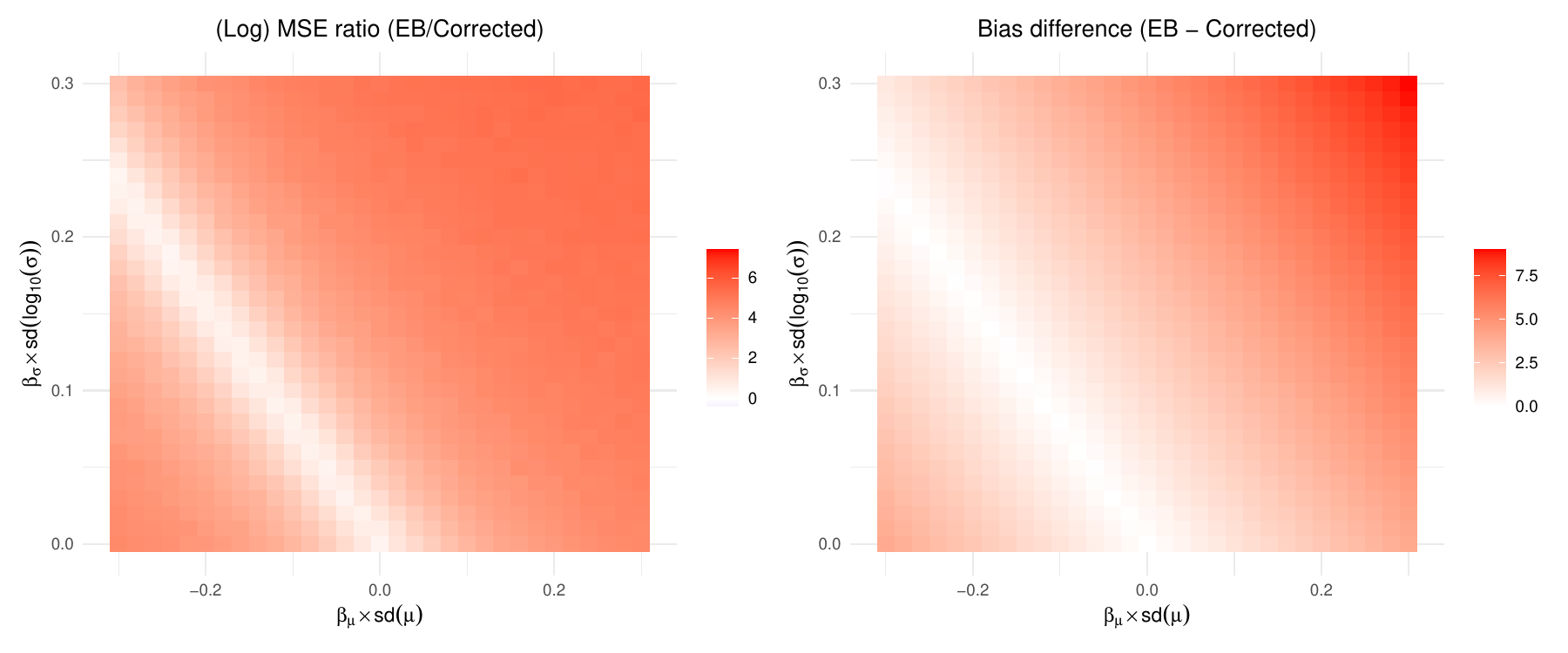}
     \caption{Simulation results for linear-in-$\mu$
      regressions.}
    \caption*{\footnotesize{\emph{Notes}: The left figure shows a heat map of the (natural) log of the
      ratio of the mean squared error (MSE) of $\hat\beta$ and
      $\tilde\beta$, over 1,000 simulation draws. The right figure shows a similar heat map, but for the
      difference in the bias of the two estimators.}}
    \label{fig:sim_linear}
  \end{figure}
  \section{Simulation}
\label{sec:simulation}
We follow \cite{chen2023empirical} and calibrate our data generating process (DGP) to the
Opportunity Atlas (\citealp{chetty2020opportunity}), which provides (unshrunk)
economic mobility estimates $X_{i}$ and their standard errors $\sigma_{i}$. One
measure of economic mobility $\mu_{i}$ of tract $i$ they consider is the
probability that a Black individual becomes relatively high-income (i.e., has
family income in the top 20 percentiles nationally) after growing up relatively
poor in tract $i$. (i.e., with parents at the 25th percentile
nationally). We have 10,058 tracts in the data.

Taking this mobility measure as our measure of interest, we estimate the conditional mean
function $ \E [\mu_{i} \mid  \sigma_{i} ] = \E[X_{i}  \mid \sigma_{i} ]$ and the conditional
variance function $\var(\mu_{i} \mid\sigma_{i}) = \var(X_{i}\mid \sigma_{i}) -\sigma_{i}^{2}$
via local linear regression implemented by \cite{calonico2019nprobust}; denote the
estimates of the conditional mean and conditional variance functions as $\hat{m}(\cdot)$
and $\hat{s}^{2}(\cdot)$, respectively. We use these estimates to draw $\mu_{i}$'s from
the distribution $\mu_{i} \mid \sigma_{i} \sim \Norm(\hat{m}(\sigma_{i}),
\hat{s}^{2}(\sigma_{i}))$. The outcome variable $Y_{i}$ for the regression of
interest is generated by
\[
  Y_{i} = \beta_{\mu} \mu_{i} + \beta_{\sigma} \log_{10} \sigma_{i} + u_{i},
\]
where $u_{i} \sim \Norm(0, 1)$. Under this DGP, the linear projection coefficient---in a
hypothetical regression of $Y_i$ on $\mu_i$---we wish to estimate is given by
\[\beta_{0} = \beta_{\mu} + \frac{\cov(\mu_{i}, \log_{10} \sigma_{i})}{\var(\mu_{i})}
  \beta_{\sigma}.\] We vary $(\beta_{\mu},\beta_{\sigma})$\footnote{Specifically,
  we vary $(\beta_{\mu},\beta_{\sigma})$ so that $\var(\mu)^{1/2}\beta_{\mu}$
  ranges from $-.3$ to $.3$ and $\var(\log_{10}(\sigma))^{1/2}\beta_{\sigma}$
  from $0$ to $.3$. Combined with the fact that we take $\var(u_{i}) = 1$, this
  results in a DGP where the regressors have realistic explanatory power.}
and compare the performance of our proposed estimator $\hat{\beta}$, given in
\eqref{eq:inf_est}, with the regression-on-shrinkage estimator $\tilde{\beta}$,
given in \eqref{eq:reg_shr_est}.

Figure \ref{fig:sim_linear} summarizes the results of our simulation exercise in
this setting by comparing the MSE and bias of the two estimators across the
different DGPs. As expected, the MSE of $\hat{\beta}$ is smaller than the
$\tilde{\beta}$ in almost all specifications, and this improvement is
substantial in a wide range of simulations. For example, when
$\var(\mu)^{1/2}\beta_{\mu} = .2$ and
$\var(\log_{10}(\sigma))^{1/2}\beta_{\sigma} = .05$, which is a setting where,
roughly speaking, $\log_{10}(\sigma)$ has a low explanatory power compared to
$\mu$, the log MSE ratio is around $4.36$, which indicates that the MSE of
$\tilde{\beta}$ is about $78$ times greater than that of
$\hat{\beta}$. Similarly, the bias
plot shows that $\hat{\beta}$ having a smaller bias than
$\tilde{\beta}$ across all DGPs, which is expected given that $\hat{\beta}$ is
unbiased across all DGPs we consider.

For the simulations regarding nonlinear
transformations of $\mu_{i}$, we use a simpler DGP for $\mu_{i}$. Specifically, we take an estimate of the unconditional mean and variance implied by
the previous DGP,
$\hat{m} = n^{-1}\sum_{i=1}^{n}\hat{m}(\sigma_{i})$ and $\hat{s}^{2} =
n^{-1}\sum_{i=1}^{n}\hat{s}^{2}(\sigma_{i}) +
n^{-1}\sum_{i=1}^{n}(\hat{m}(\sigma_{i}) - \hat{m})^{2}$ and draw $\mu_{i} \sim
G$, where $G$ is the distribution function of $\Norm(\hat{m}, \hat{s}^{2})$. Then, we generate the outcome variable as
\[
Y_{i} = \tau_{0}\one(\mu_i > \mu_0) + u_{i},
\]
where $u_{i} \sim N(0, 1)$ is independent of $(\mu_{i}, \sigma_{i})$, and
$\mu_{0}$ is some fixed threshold assumed to be known. We vary $\tau_{0}$ and
$\mu_{0}$ to learn about the performance of the different estimators we describe
below.

\begin{figure}[tbp]
  \centering
  \begin{subfigure}{.49\textwidth}
    \centering
    \includegraphics[width=\textwidth]{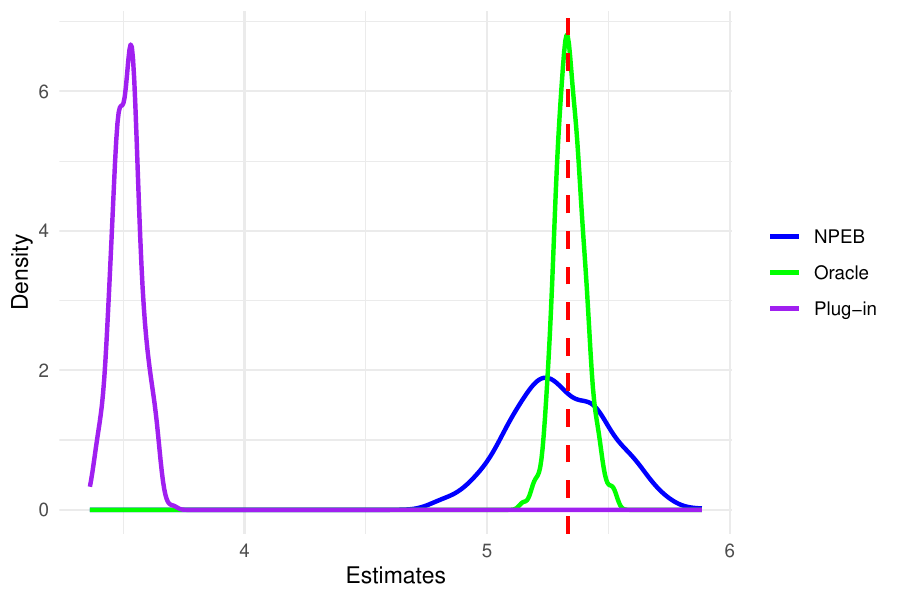}
    \caption{$\mu_{0}=$.75-quantile of $G$}
    \label{fig:fig1}
  \end{subfigure}
  \begin{subfigure}{.49\textwidth}
    \centering
    \includegraphics[width=\textwidth]{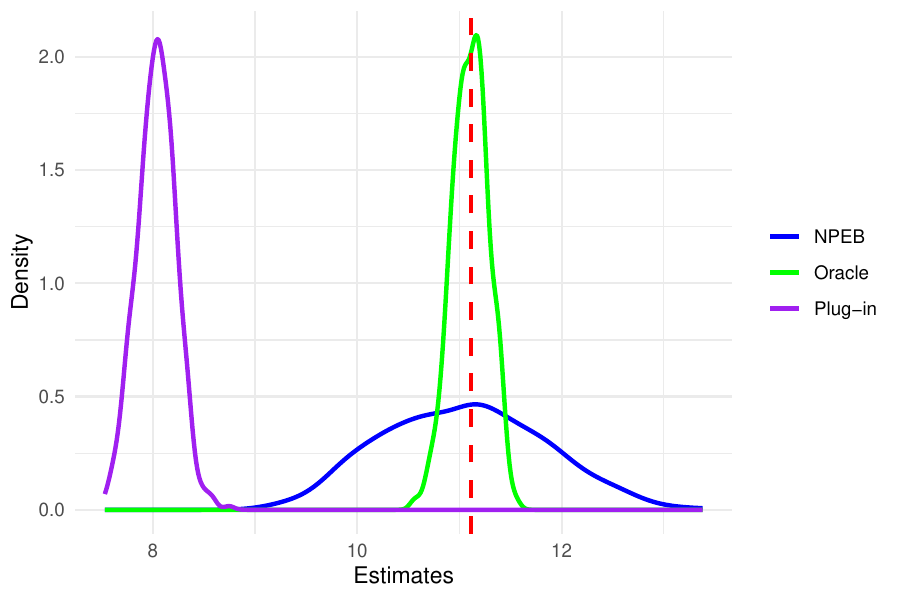}
    \caption{$\mu_{0}=$.9-quantile of $G$}
    \label{fig:fig2}
  \end{subfigure}
  \caption{Simulation results for regressions on nonlinear transformations of
    $\mu$.}
  \caption*{\footnotesize{\emph{Notes}: The figures show density plots of three different estimates across 500
    simulation draws. The left figure shows results for a DGP where the
    threshold level $\mu_{0}$ is set at the .75-quantile of $G$. The right
    figure shows the same for a DGP where the threshold level $\mu_{0}$ is set
    at the .90-quantile of $G$. The red dashed line shows the true parameter value.}}
  \label{fig:sim_nonlinear}
\end{figure}

We consider three distinct estimators. First is the oracle estimator
$\hat{\tau}_{0}^{\mathrm{oracle}}$ obtained by regressing $Y_{i}$ on the true
$\E_{G}[\one(\mu_i > \mu_0) \mid X_{i}, \sigma_{i}]$ using knowledge of the true $G$. By
Proposition \ref{prop:nonlinear_consistent}, $\hat{\tau}_{0}^{\mathrm{oracle}}$ is
consistent at the usual $n^{-1/2}$-rate. The second estimator we consider is the
nonparametric empirical Bayes (NPEB) estimator $\hat{\tau}^{\mathrm{NPEB}}_{0}$ obtained
by regressing $Y_{i}$ on $\E_{\hat{G}}[\one(\mu_i > \mu_0) \mid X_{i}, \sigma_{i}]$, where
$\hat{G}$ is obtained by using nonparametric maximum likelihood as in
\cite{gilraine2020new}. This estimator is consistent, but potentially at a slower rate, as
implied by \cref{thm:minimax}. Finally, we also consider a plug-in estimator
$\hat{\tau}_{0}^{\text{plug-in}}$ which is obtained by regressing $Y_{i}$ on
$\one(\E_{G}[\mu_{i} \mid X_{i}, \sigma_{i}] > \mu_{0})$.\footnote{We use an
``infeasible'' plug-in estimator in the sense that we take the true $G$ to calculate
$\E_{G}[\mu_{i} \mid X_{i}, \sigma_{i}]$. Hence, the results should be considered an upper
bound on how well a plug-in rule can do.} This estimator mimics the rather common
empirical practice of plugging in the shrunk estimates $\hat{\mu}_{i}$ for $\mu_{i}$ in
various downstream analyses.

Figure \ref{fig:sim_nonlinear} shows the simulation results for two DGPs where
we set $\tau_{0}$ so that $\var(\one(\mu_i > \mu_0))^{1/2}\tau_{0} = 1$. We
consider two threshold levels for $\mu_{0}$, the .75-quantile and .90-quantile
of $G$. The plug-in estimator $\hat{\tau}_{0}^{\text{plug-in}}$ is
substantially biased, as expected since $\hat{\tau}_{0}^{\text{plug-in}}$
plugging the posterior of $\mu_{i}$ into nonlinear transformation does not in
general correct the measurement error. On the other hand, we see that
$\hat{\tau}_{0}^{\mathrm{oracle}}$ is unbiased, as expected by Proposition
\ref{prop:nonlinear_consistent}. Note that $\hat{\tau}_{0}^{\mathrm{oracle}}$
shows higher accuracy when $\mu_{0}$ is set at the .75-quantile of $G$, because
this results in regressors with higher variance. Finally, while the estimates
generated by $\hat{\tau}_{0}^{\mathrm{NPEB}}$ are centered around the true
parameter value, the distribution is quite dispersed and substantially noisier
than $\hat{\tau}_{0}^{\mathrm{oracle}}$. Given that the sample size is
moderately large at 10,058, this demonstrates the slow convergence rate
predicted by Theorem \ref{thm:minimax}. The simulation results for other
specifications of $\tau_{0}$ and $\mu_{0}$ are qualitatively similar.

\section{Empirical applications}
\label{sec:empirical}

\subsection{Impact of examiner on patent outcome}
We revisit how patent examiners leniency (i.e., leniency in patent crafting) predicts the
outcomes of the patents they decide to grant \citep{feng2020crafting}. As background, for
each patent application, an examiner from a specific art unit is randomly assigned to
assess the application and determine its merit for patent granting. There is a significant
interaction between patent applicants and examiners during the revision of claims until
the patent is either granted or the application is withdrawn. All other factors being
equal, a stricter examiner may raise numerous questions regarding prior art, appropriate
citations, and required clarifications before granting the patent, while a lenient
examiner may provide minimal feedback for similar patents. This crafting process could
influence the quality and clarity of the patent, should it be granted, which in turn could
affect its market value, litigation propensity, and citations.

In particular, we consider  the following infeasible regression:
\[
Y_{j}= \beta_0 \mu_{i(j)} + a_{ut(j)} + \epsilon_j
\]
where $j$ indexes the patent, $i$ the examiner, $u$ the art unit, and $t$ the filing year.
The parameter $\beta_0$ represents the effect of examiner $i$'s leniency during the
crafting process, denoted as $\mu_{i(j)}$, on the patent outcome $Y_j$. Following
\cite{feng2020crafting}, we use two measures of leniency. The first measure accounts for
the average percentage change in the number of words per claim between the application and
post-grant stages. The second measure records the percentage change in the total number of
claims. These measures are constructed using the pre- and post-grant publication
data\footnote{These data are publically available at
\url{https://patentsview.org/download/pg_claims} and
\url{https://patentsview.org/download/claims}.}.

An examiner who demands an increase in the length of the claims---typically to clarify
distinctions from existing grants and enhance precision---is considered more careful and
stringent. Similarly, an examiner who requests a reduction in the number of claims to
limit the scope of the patent is also regarded as more stringent. As expected, these
measures are, at best, noisy estimates of the true leniency $\mu_i$.

We focus on the citation outcomes of patents within three years of grant. Additionally, we
report results on the probability of purchase and litigation by a Patent Assertion Entity
(PAE). In this case, the outcome variable takes a value of 1 if the granted patent is
purchased by a PAE and is also litigated in court for infringement. We obtain the set of
patents litigated by PAE firms from the Stanford NPE Litigation Database, which
categorizes assertors into several categories. We classify acquired patents, failed
startups, individual-inventor-stated companies, and individuals as Patent Assertion
Entities (PAEs).  As documented in \cite{feng2020crafting}, these entities acquire patents
from third parties and generate revenue by asserting them against alleged infringers,
commonly known as patent trolls. We also include additional cases from the Unified Patent
and Lex Machina databases\footnote{We thank Dr. Tommaso Alba for providing these data.}. The
citation data is publicly available from the USPTO and includes all citations received by
a patent since its grant, updated annually.

The discussion below excludes the art unit and year fixed effects. In practice, we first
purge these fixed effects from all other variables as discussed in Section \ref{sub:implement}.\footnote{We consider only those
examiners who serve in a single art unit and have granted at least 10 patents, leaving us
with 4,615 examiners.} Let the measure of leniency for each patent $j$ reviewed by examiner $i$ be denoted by $E_{ij}$ and $N_i$
represents the number of patents granted by examiner $i$. Then $Y_i = \frac{1}{N_i}
\sum_{j: i(j) = i}Y_{ij}$ is the examiner-level mean outcome\footnote{Here we index the
patent level outcome by $Y_{ij}$ rather than $Y_j$ to clarify that the patent level
variables are more granular than the examiner level.}, and $E_i= \frac{1}{N_i} \sum_{j:
i(j) = i} E_{ij}$ is the examiner's leniency, for which
$
	E_i \mid \mu_i \sim N(\mu_i, \sigma_i^2).
$
where $\sigma_i^2$ is proportional to the inverse of $N_i$. 	
	
Replacing $\mu_{i(j)}$ with $E_{i(j)}$ introduces attenuation bias. Accounting for the fact that we aggregate the patent level regression to the examiner level, the population least-square objective is 
\[
\mathbb{E} \Big[ \sum_{j: i(j) = i} (Y_{ij} - \delta_0 - \beta_0 \mu_i)^2 \Big ] = \mathbb{E}[N_i (Y_i - \delta_0 - \beta_0 \mu_i)^2] + \text{constant}
\]
Therefore, the population coefficient $\beta_0$ is the weighted least square coefficient at the examiner level, with weights proportional to $N_i$. The natural sample analogue, correcting for measurement error, leads to: 
\[
\hat \beta = \frac{\sum_i W_i Y_i E_i - (\sum_i W_i E_i)(\sum_i W_i Y_i)}{\sum_i W_i E_i^2 - (\sum_i W_i E_i)^2 - \sum_i W_i \sigma_i^2}
\quad W_i = N_i/\sum_i N_i.\]

An additional concern may be that the left-hand side variable is also a measurement of a latent effect and contains measurement error. In this case, it is possible that the measurement error in both $Y_{ij}$ and $E_{ij}$ are correlated. To model this, we consider the following model 
\[
\begin{pmatrix} Y_{ij} \\ E_{ij} \end{pmatrix} \sim N \left  ( \begin{pmatrix} \rho_i \\ \mu_i \end{pmatrix}, \Sigma \right)\quad \text{and } \quad \begin{pmatrix} Y_{i} \\ E_{i} \end{pmatrix} \sim N \left  ( \begin{pmatrix} \rho_i \\ \mu_i \end{pmatrix}, \Sigma_i \right) 
\]
where the second display is the model for the examiner-specific sample averages $(Y_i,
E_i)$ and $\Sigma_i = \Sigma/N_i$. The coefficient $\beta_0$ can be related to the
population coefficient of a regression of $Y_i$ on $E_i$ by
$
\beta = \frac{\cov(Y_i, E_i) - \Sigma_{i, 12}}{\var(E_i) - \Sigma_{i, 22}}.
$
In the presence of heterogeneous $N_i$, the natural weighted least square estimator with the correction due to measurement error on both sides yields 
\[
\check{\beta} = \frac{\sum_i W_i Y_i E_i  - (\sum_iW_i E_i) (\sum_i W_i Y_i) - \sum_i W_i \hat \Sigma_{i,12}}{\sum_i W_i E_i^2 - (\sum_i W_iE_i)^2 - \sum_i W_i \sigma_i^2}
\]
where $\hat \Sigma_{i,12}$ is estimated based on patent level data $(Y_{ij}, E_{ij})$ and normalized by $N_i$. 

In contrast to the two proposed estimators above, the common regress-on-shrinkage approach constructs the posterior mean as:
\[
\bar E_i = \hat \mu_0 + \frac{\hat \sigma_e^2}{\sigma_i^2 + \hat \sigma_e^2} (E_i - \hat \mu_0)
\]
where $\hat \mu_0$ and $\hat \sigma_e^2$ are often estimated via moments of $E_i$. The resulting weighted least square estimator is: 
\[
\tilde \beta = \frac{\sum_i W_i Y_i \bar E_i - (\sum_i W_i \bar E_i)(\sum_i W_i Y_i)}{\sum_i W_i \bar E_i^2 - (\sum_i W_i \bar E_i)^2}.
\]

\Cref{tab: result} reports the estimates for $\beta_0$ using different methods. The column
with label FE represents the OLS estimation without correcting for measurement error in
the leniency measures. This estimator biases towards zero due to un-corrected measurement error. The next column reports the regress-on-shrinkage estimator.  
The last two columns (labeled as FE-Corrected and FE-Corrected-Twoside) report estimates
based on the measurement error correction. FE-Corrected only considers measurement error
in leniency measures while FE-Corrected-Twoside accounts for measurement error from both the
left and right hand side variables. The standard errors in the bracket for our proposed
method are obtained via bootstrap with 999 bootstrap repetitions.

\begin{table}[h!] 
	\centering
	
	\caption{Patent Level Outcome Regression }
    \label{tab: result} 
	\begin{tabular}{@{\extracolsep{0pt}}ll ccc} 
			\\[-1.8ex]\hline 
		\hline \\[-1.8ex]
		& FE & Shrinkage & FE-Corrected & FE-Corrected-Twoside \\ 
		\hline
		\hline 
	&	\multicolumn{3}{r}{Citation within 3 year}\\
		\hline 
	word change & $-3.307^{***}$ & $-3.950^{***}$ & $-3.902^{***}$ & $-3.786^{***}$ \\ 
	 & (0.536) & (0.614) & (0.805) & (0.841) \\ 
claims change & $2.924^{***}$ & $5.705^{***}$ & $4.307^{*}$& $4.275^{*}$ \\ 
 & (0.780) & (1.110) & (2.203) & (2.236) \\ 
			\hline 
				&	\multicolumn{3}{r}{PAE Litigated}\\
				\hline 
				word change & $-0.0034^{***}$ & $-0.0039^{***}$ & $-0.0040^{***}$ & $-0.0040^{***}$ \\ 
				 & (0.0007) & (0.0008) & (0.0009) & (0.0009) \\ 
				claims change & $0.0027^{**}$ & $0.0043^{**}$ & $0.0039^{**}$ & $0.0037^{**}$ \\ 
				 & (0.0010) & (0.0014) & (0.0016) & (0.0017) \\ 
				\hline 
					\hline \\[-1.8ex] 
			\textit{Note:}  &\multicolumn{4}{r}{$^{*}$p$<$0.1; $^{**}$p$<$0.05; $^{***}$p$<$0.01.} 
	\end{tabular}
\end{table}

For the PAE litigation outcome, there is little difference between the shrinkage method
and the direct correction we proposed. However, for the citation outcome, the difference
is noticeable. As reported in \cref{tab: sigma}, individual variances $\sigma_i$ appear to
have a significant impact on the citation outcome, but not on the PAE litigation outcome.
This suggests some evidence of a violation of the assumptions necessary to ensure the
consistency of the regress-on-shrinkage estimator.

\begin{table}[!htbp] \centering
		\caption{Regression on logarithm of $\sigma_i$.} \label{tab: sigma} 
	\begin{tabular}{@{\extracolsep{5pt}}lD{.}{.}{-3} D{.}{.}{-3} } 
		\\[-1.8ex]\hline 
		\hline \\[-1.8ex] 
		& \multicolumn{2}{c}{\textit{Dependent variable:}} \\ 
		\cline{2-3} 
		\\[-1.8ex] & \multicolumn{1}{c}{Citation within 3 years} & \multicolumn{1}{c}{PAE Litigated} \\ 
		\hline \\[-1.8ex] 
		$\log_{10}(\sigma_i) $& -0.749^{***} & -0.0002 \\ 
		& (0.141) & (0.0002) \\ 
		Constant & -1.346^{***} & -0.0004 \\ 
		& (0.256) & (0.0003) \\ 
		\hline 
		\hline \\[-1.8ex] 
		\textit{Note:}  & \multicolumn{2}{r}{$^{*}$p$<$0.1; $^{**}$p$<$0.05; $^{***}$p$<$0.01} \\ 
	\end{tabular} 
\end{table} 

\subsection{Teacher value-added in a low-income country} 

We revisit 
\citet{bau2020teacher}, who compute teacher-value added in a development economics
 context. In their Table 8, they consider whether value-added predicts log salary of
 teachers, for teachers in the public sector and in the private sector. \citet
 {bau2020teacher} find that value added does not predict salary for teachers in the
 public sector but do for those in the private sector.

\begin{figure}[htb]
  \centering
  \includegraphics{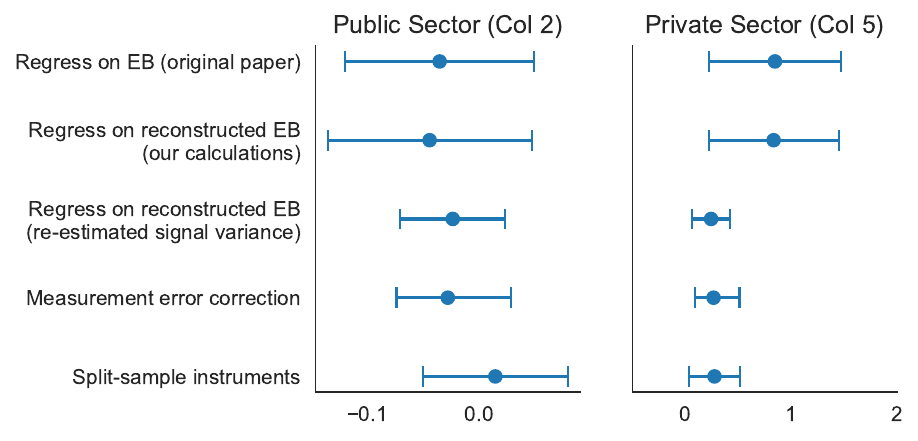}

\begin{proof}[Notes]
We consider the specifications (2) and (5) in Table 8 in \citet{bau2020teacher}.  The
  first row uses EB posterior mean estimates from \citet{bau2020teacher} and estimates
  \eqref{eq:reg_shr_est}, adjusting for additional covariates. The second row reconstructs
   these estimates---different from the original paper due to small sample differences.
   Owing to the inconsistency between $\hat\sigma_\mu^2$, $\sigma_i^2$, and the sample
   variance of the value-added estimates, the third row reestimates $\sigma_\mu^2$
   through an analogue estimator and recomputes the empirical Bayes estimates. Using the
   recomputed $\sigma_\mu^2$, the fourth row implements the standard measurement error
   correction. Finally, the fifth row constructs two independent noisy estimates by
   adding and subtracting Gaussian noises, and uses one noisy estimate to instrument for
   the other.
\end{proof}

  \caption{Robust estimators for the predictive relationship between teacher value-added
  and log salary (Table 8, \citet{bau2020teacher})}. 
  \label{fig:baudas}
\end{figure}

For generating shrunken teacher-value
added estimates, \citet{bau2020teacher} do not directly compute from
\eqref{eq:par_emp_bayes} given a set of preliminary noisy estimates. Rather, both the
 signal variance $\hat\sigma_\mu^2$ and the idiosyncratic variances $\sigma_i^2$ are
 estimated jointly through a sample-splitting procedure. A somewhat unusual feature of
 this computation is that the estimated total variance of teacher estimates does not
 equal the sum of $\hat\sigma_\mu^2$ and $\sigma_i^2$. In what follows, we take $\E_n
 \sigma_i^2$ from this procedure as the sampling variance of the noisy estimates, and we
  compute $\hat\sigma_\mu^2 = \var_n(X_i) - \E_n\sigma_i^2$ as an analogue estimator.
  See \cref{sec:details_for_the_empirical_application} for details.

We show alternative estimates in \cref{fig:baudas}. The differences in the estimates are
broadly explained by different methods for estimating $\sigma_\mu^2$. In this empirical example, our proposed estimator yields results similar to those of the conventional method, reaffirming the authors' original findings under weaker conditions. This is consistent with the fact that the shrinkage
factors do not appear to be strongly correlated with $\mu_i$.

\section{Conclusion} This paper critically examines the widespread practice of using the
 empirical Bayes shrinkage estimator to correct for measurement errors in regression
 models incorporating individual latent effects. Our analysis reveals that this approach
 only provides reliable estimates for the regression coefficients under unnecessarily
 stringent conditions. We demonstrate that the classical correction, which we advocate,
 holds under weaker assumptions and cannot be asymptotically improved when the latent
 effect enters the regression model linearly. In cases where the latent effect enters
 non-linearly, empirical Bayes shrinkage leads to slower minimax estimation rates. These
 findings underscore the limitations of using the regress-on-shrinkage estimator as a
 method for correcting measurement error in regression models.

\newpage
	
	\bibliographystyle{chicago}
 \bibliography{main.bib}

\begin{thebibliography}{44}
\providecommand{\natexlab}[1]{#1}

\bibitem[{Angelova \textit{et~al.}(2023)Angelova, Dobbie and
  Yang}]{angelova2023algorithmic}
\textsc{Angelova, V.}, \textsc{Dobbie, W.~S.} and \textsc{Yang, C.} (2023).
  \textit{Algorithmic recommendations and human discretion}. Tech. rep.,
  National Bureau of Economic Research.

\bibitem[{Angrist \textit{et~al.}(2023)Angrist, Hull and
  Walters}]{angrist2023methods}
\textsc{Angrist, J.}, \textsc{Hull, P.} and \textsc{Walters, C.} (2023).
  Methods for measuring school effectiveness. \textit{Handbook of the Economics
  of Education}, \textbf{7}, 1--60.

\bibitem[{Battaglia \textit{et~al.}(2024)Battaglia, Christensen, Hansen and
  Sacher}]{battaglia2024inference}
\textsc{Battaglia, L.}, \textsc{Christensen, T.}, \textsc{Hansen, S.} and
  \textsc{Sacher, S.} (2024). Inference for regression with variables generated
  by ai or machine learning.

\bibitem[{Bau and Das(2020)}]{bau2020teacher}
\textsc{Bau, N.} and \textsc{Das, J.} (2020). Teacher value added in a
  low-income country. \textit{American Economic Journal: Economic Policy},
  \textbf{12}~(1), 62--96.

\bibitem[{Bickel and Ritov(1987)}]{bickel1987efficient}
\textsc{Bickel, P.} and \textsc{Ritov, Y.} (1987). Efficient estimation in the
  errors in variables model. \textit{The Annals of Statistics},
  \textbf{15}~(2), 513--540.

\bibitem[{Bruhn \textit{et~al.}(2022)Bruhn, Imberman and
  Winters}]{bruhn2022regulatory}
\textsc{Bruhn, J.}, \textsc{Imberman, S.} and \textsc{Winters, M.} (2022).
  Regulatory arbitrage in teacher hiring and retention: {{Evidence}} from
  {{Massachusetts Charter Schools}}. \textit{Journal of Public Economics},
  \textbf{215}, 104750.

\bibitem[{Cai and Low(2011)}]{cai2011testing}
\textsc{Cai, T.~T.} and \textsc{Low, M.~G.} (2011). Testing composite
  hypotheses, hermite polynomials and optimal estimation of a nonsmooth
  functional. \textit{The Annals of Statistics}, pp. 1012--1041.

\bibitem[{Calonico \textit{et~al.}(2019)Calonico, Cattaneo and
  Farrell}]{calonico2019nprobust}
\textsc{Calonico, S.}, \textsc{Cattaneo, M.~D.} and \textsc{Farrell, M.~H.}
  (2019). nprobust: Nonparametric kernel-based estimation and robust
  bias-corrected inference. \textit{Journal of Statistical Software},
  \textbf{91}, 1--33.

\bibitem[{Chandra \textit{et~al.}(2016)Chandra, Finkelstein, Sacarny and
  Syverson}]{chandra2016health}
\textsc{Chandra, A.}, \textsc{Finkelstein, A.}, \textsc{Sacarny, A.} and
  \textsc{Syverson, C.} (2016). Health care exceptionalism? performance and
  allocation in the us health care sector. \textit{American Economic Review},
  \textbf{106}~(8), 2110--2144.

\bibitem[{Chen(2025)}]{chen2023empirical}
\textsc{Chen, J.} (2025). Empirical bayes when estimation precision predicts
  parameters. \textit{arXiv preprint arXiv:2212.14444}.

\bibitem[{Chen and Santos(2018)}]{chen2018overidentification}
\textsc{Chen, X.} and \textsc{Santos, A.} (2018). Overidentification in regular
  models. \textit{Econometrica}, \textbf{86}~(5), 1771--1817.

\bibitem[{Chetty \textit{et~al.}(2020)Chetty, Friedman, Hendren, Jones and
  Porter}]{chetty2020opportunity}
\textsc{Chetty, R.}, \textsc{Friedman, J.~N.}, \textsc{Hendren, N.},
  \textsc{Jones, M.~R.} and \textsc{Porter, S.} (2020). \textit{The Opportunity
  Atlas Mapping the Childhood Roots of Social Mobility}. Tech. rep.

\bibitem[{Chetty \textit{et~al.}(2014{\natexlab{a}})Chetty, Friedman and
  Rockoff}]{chetty2014measuringa}
\textsc{---}, \textsc{---} and \textsc{Rockoff, J.~E.} (2014{\natexlab{a}}).
  Measuring the impacts of teachers {I}: {E}valuating bias in teacher
  value-added estimates. \textit{American Economic Review}, \textbf{104}~(9),
  2593--2632.

\bibitem[{Chetty \textit{et~al.}(2014{\natexlab{b}})Chetty, Friedman and
  Rockoff}]{chetty2014measuringb}
\textsc{---}, \textsc{---} and \textsc{---} (2014{\natexlab{b}}). Measuring the
  impacts of teachers {II}: {T}eacher value-added and student outcomes in
  adulthood. \textit{American Economic Review}, \textbf{104}~(9), 2633--79.

\bibitem[{Chetty and Hendren(2018)}]{chetty2018impacts}
\textsc{---} and \textsc{Hendren, N.} (2018). The impacts of neighborhoods on
  intergenerational mobility {II}: {C}ounty-level estimates. \textit{Quarterly
  Journal of Economics}, \textbf{133}~(3), 1163--1228.

\bibitem[{de~Chaisemartin and Deeb(2024)}]{de2024estimating}
\textsc{de~Chaisemartin, C.} and \textsc{Deeb, A.} (2024). Estimating
  treatment-effect heterogeneity across sites in multi-site randomized
  experiments with imperfect compliance. \textit{arXiv e-prints}, pp.
  arXiv--2405.

\bibitem[{Deaton(1985)}]{deaton1985panel}
\textsc{Deaton, A.} (1985). Panel data from time series of cross-sections.
  \textit{Journal of econometrics}, \textbf{30}~(1-2), 109--126.

\bibitem[{Deeb(2021)}]{deeb2021framework}
\textsc{Deeb, A.} (2021). A framework for using value-added in regressions.
  \textit{arXiv preprint arXiv:2109.01741}.

\bibitem[{Devereux(2007)}]{devereux2007improved}
\textsc{Devereux, P.~J.} (2007). Improved errors-in-variables estimators for
  grouped data. \textit{Journal of Business \& Economic Statistics},
  \textbf{25}~(3), 278--287.

\bibitem[{DeVore and Lorentz(1993)}]{devore1993constructive}
\textsc{DeVore, R.~A.} and \textsc{Lorentz, G.~G.} (1993). \textit{Constructive
  approximation}, vol. 303. Springer Science \& Business Media.

\bibitem[{Fan and Truong(1993)}]{fan1993nonparametric}
\textsc{Fan, J.} and \textsc{Truong, Y.~K.} (1993). Nonparametric regression
  with errors in variables. \textit{The Annals of Statistics}, pp. 1900--1925.

\bibitem[{Feng and Jaravel(2020)}]{feng2020crafting}
\textsc{Feng, J.} and \textsc{Jaravel, X.} (2020). Crafting intellectual
  property rights: Implications for patent assertion entities, litigation, and
  innovation. \textit{American Economic Journal: Applied Economics},
  \textbf{12}~(1), 140--181.

\bibitem[{Fenizia(2022)}]{fenizia2022managers}
\textsc{Fenizia, A.} (2022). Managers and productivity in the public sector.
  \textit{Econometrica}, \textbf{90}~(3), 1063--1084.

\bibitem[{Fuller(1987)}]{fuller2009measurement}
\textsc{Fuller, W.~A.} (1987). \textit{Measurement error models}. John Wiley \&
  Sons.

\bibitem[{Gilraine \textit{et~al.}(2020)Gilraine, Gu and
  McMillan}]{gilraine2020new}
\textsc{Gilraine, M.}, \textsc{Gu, J.} and \textsc{McMillan, R.} (2020).
  \textit{A new method for estimating teacher value-added}. Tech. rep.,
  National Bureau of Economic Research.

\bibitem[{Gu and Koenker(2017)}]{gu2017unobserved}
\textsc{Gu, J.} and \textsc{Koenker, R.} (2017). Unobserved heterogeneity in
  income dynamics: An empirical bayes perspective. \textit{Journal of Business
  \& Economic Statistics}, \textbf{35}~(1), 1--16.

\bibitem[{Jackson(2018)}]{jackson2018test}
\textsc{Jackson, C.~K.} (2018). What do test scores miss? the importance of
  teacher effects on non--test score outcomes. \textit{Journal of Political
  Economy}, \textbf{126}~(5), 2072--2107.

\bibitem[{Jacob and Lefgren(2005{\natexlab{a}})}]{jacob2005principle}
\textsc{Jacob, B.} and \textsc{Lefgren, L.} (2005{\natexlab{a}}).
  \textit{Principals as {{Agents}}: {{Subjective Performance Measurement}} in
  {{Education}}}. Tech. Rep. w11463, {National Bureau of Economic Research},
  {Cambridge, MA}.

\bibitem[{Jacob and Lefgren(2005{\natexlab{b}})}]{jacob2005parents}
\textsc{---} and \textsc{---} (2005{\natexlab{b}}). What do parents value in
  education? an empirical investigation of parents' revealed preferences for
  teachers.

\bibitem[{Kane \textit{et~al.}(2008)Kane, Rockoff and Staiger}]{kane2008does}
\textsc{Kane, T.~J.}, \textsc{Rockoff, J.~E.} and \textsc{Staiger, D.~O.}
  (2008). What does certification tell us about teacher effectiveness?
  {E}vidence from {New York City}. \textit{Economics of Education Review},
  \textbf{27}~(6), 615--631.

\bibitem[{Kane and Staiger(2008)}]{kane2008estimating}
\textsc{---} and \textsc{Staiger, D.~O.} (2008). \textit{Estimating teacher
  impacts on student achievement: An experimental evaluation}. Tech. rep.,
  National Bureau of Economic Research.

\bibitem[{Kwon(2023)}]{kwon2023optimalshrinkageestimationfixed}
\textsc{Kwon, S.} (2023). Optimal shrinkage estimation of fixed effects in
  linear panel data models.

\bibitem[{Lehmann and Romano(2008)}]{lehmann1986testing}
\textsc{Lehmann, E.~L.} and \textsc{Romano, J.~P.} (2008). \textit{Testing
  statistical hypotheses}, vol.~3. Springer.

\bibitem[{Mulhern(2023)}]{mulhern2023beyond}
\textsc{Mulhern, C.} (2023). Beyond teachers: Estimating individual school
  counselors’ effects on educational attainment. \textit{American economic
  review}, \textbf{113}~(11), 2846--2893.

\bibitem[{Robbins(1956)}]{robbins56}
\textsc{Robbins, H.} (1956). An empirical {B}ayes approach to statistics. In
  \textit{Proceedings of the Third Berkeley Symposium on Mathematical
  Statistics and Probability}, University of California Press: Berkeley,
  vol.~I.

\bibitem[{Rose \textit{et~al.}(2022)Rose, Schellenberg and
  Shem-Tov}]{rose2022effects}
\textsc{Rose, E.~K.}, \textsc{Schellenberg, J.~T.} and \textsc{Shem-Tov, Y.}
  (2022). \textit{The effects of teacher quality on adult criminal justice
  contact}. Tech. rep., National Bureau of Economic Research.

\bibitem[{Stigler(1990)}]{stigler19901988}
\textsc{Stigler, S.~M.} (1990). The 1988 neyman memorial lecture: a galtonian
  perspective on shrinkage estimators. \textit{Statistical Science}, pp.
  147--155.

\bibitem[{Sullivan(2001)}]{sullivan2001note}
\textsc{Sullivan, D.~G.} (2001). A note on the estimation of linear regression
  models with heteroskedastic measurement errors. \textit{Available at SSRN
  295567}.

\bibitem[{Tsiatis(2006)}]{tsiatis2006semiparametric}
\textsc{Tsiatis, A.~A.} (2006). \textit{Semiparametric theory and missing
  data}, vol.~4. Springer.

\bibitem[{van~der Vaart(2000)}]{van2000asymptotic}
\textsc{van~der Vaart, A.~W.} (2000). \textit{Asymptotic statistics}, vol.~3.
  Cambridge university press.

\bibitem[{Walters(2024)}]{walters2024empirical}
\textsc{Walters, C.} (2024). Empirical bayes methods in labor economics. In
  \textit{Handbook of Labor Economics}, vol.~5, Elsevier, pp. 183--260.

\bibitem[{Warnick \textit{et~al.}(2025)Warnick, Light and
  Yim}]{warnick2024instructor}
\textsc{Warnick, M.}, \textsc{Light, J.} and \textsc{Yim, A.} (2025).
  Instructor value-added in higher education.

\bibitem[{Wu and Yang(2020)}]{wu2020polynomial}
\textsc{Wu, Y.} and \textsc{Yang, P.} (2020). Polynomial methods in statistical
  inference: Theory and practice. \textit{Foundations and
  Trends{\textregistered} in Communications and Information Theory},
  \textbf{17}~(4), 402--586.

\bibitem[{Xie(2025)}]{xie2025automatic}
\textsc{Xie, T.} (2025). \textit{Automatic Inference for Value-Added
  Regressions}. Tech. rep.

\end{thebibliography}

\newpage 

\appendix 

\counterwithin{equation}{section}

\section{Proof of \cref{lemma:implication}}
We restate and prove the following result. 

\lemmaimplication*

\begin{proof}
	\begin{enumerate}
		\item Precision independence implies precision independence in the first two
		moments. Thus \cref{as:strong_low_level} implies \cref{as:strong_high_level}(1)
		by part (2) of this lemma.
		For exogeneity, write \[
			\mu^*(X_i, \sigma_i) = (1-w(\sigma_i)) \mu + w(\sigma_i) X_i
		\]
		for $\mu = \E[\mu_i]$
		By assumption, $\sigma_i \indep (\mu_i, \eta_i)$.		Then \begin{align*}
		\E[\eta_i\mu^*(X_i, \sigma_i)] &= \E[(1-w(\sigma_i))]  \mu \underbrace{\E
		[\eta_i]}_0 + \E
			[w(\sigma_i) (\mu_i + \sigma_i \epsilon_i) \eta_i] \quad \epsilon_i \sim \Norm
			(0,1)  \\
			&= \E[w(\sigma_i)] \underbrace{\E[\mu_i \eta_i]}_0 + \E[w(\sigma_i) \sigma_i] 
			\underbrace{\E
			[\epsilon_i]}_{0}
			\E[\eta_i] = 0.
		\end{align*}

		\item We compute \begin{align*}
		\cov(\mu_i, \mu^*(X_i, \sigma_i)) &= \E\cov(\mu_i, \mu^*(X_i, \sigma_i) \mid
		\sigma_i) + \cov\pr{
			\underbrace{\E[\mu_i \mid \sigma_i]}_{\text{constant } \mu}, \E[\mu^*
			(X_i,\sigma_i) \mid
			\sigma_i]
		} \\
		&= \E\bk{w(\sigma_i) \cov(\mu_i, X_i \mid \sigma_i)} \\
		&= \E\bk{w(\sigma_i) \var(\mu_i \mid \sigma_i)} \\
		&= \E\bk{\frac{1}{\sigma_\mu^2 + \sigma_i^2}} \sigma_\mu^4. 
		\end{align*}
		and \begin{align*}
		\var(\mu^*(X_i, \sigma_i)) &= \E \var(\mu^*(X_i, \sigma_i) \mid \sigma_i) +
		\var\pr{
			\underbrace{\E\bk{\mu^*(X_i, \sigma_i) \mid \sigma_i}}_{\text{constant }\mu}
		} \\
		&= \E\bk{
			w^2(\sigma_i) \var(X_i \mid \sigma_i)
		} \\
		&= \E\bk{
		\pr{\frac{\sigma_\mu^2}{\sigma_\mu^2 + \sigma_i^2}}^2	(\sigma_\mu^2 +
		\sigma_i^2)
		}\\
		&= \E\bk{\frac{1}{\sigma_\mu^2 + \sigma_i^2}} \sigma_\mu^4.
		\end{align*}
		Thus \[
			\cov(\mu_i, \mu^*(X_i, \sigma_i))/
\var(\mu^*(X_i, \sigma_i))  = 1.
		\]

\item Finally, we have that $\E[\eta_i \mid \mu_i, \sigma_i] = 0$ by assumption. Thus, \begin{align*}
\E[\eta_i \mu^*(X_i, \sigma_i)] &= \E\bk{
	\E[\eta_i \mu^*(X_i, \sigma_i) \mid \mu_i, \sigma_i]
}\\ &= \E\bk{
	\E\bk{
		\eta_i \br{
			(1-w(\sigma_i)) \mu + w(\sigma_i) (\mu_i + \sigma_i \epsilon_i)
		} \mid \mu_i, \sigma_i
	}
} \\
&= \E\bk{\sigma_i \E[\eta_i \epsilon_i \mid \mu_i, \sigma_i]} \tag{$\E[\eta_i \mid
\sigma_i, \mu_i] = 0$}
\\&= 0.
\end{align*}
where the last step follows since $\epsilon_i \mid (Y_i, \mu_i, \sigma_i)$ is mean zero.
	\end{enumerate}
\end{proof}

\section{Proof of \cref{thm:eff}}
\label{asec:eff}

We first carefully define $P_0, \mathcal P_0, P, \mathcal P$. Given $P_0$ a joint
distribution on the complete variables $(Y_i, \mu_i, \sigma_i)$,
we define $P^\obs = P^{\obs}(P_0)$ as the induced distribution on the observed variables $
(Y_i, X_i, \sigma_i)$. Let $\mathcal P_0$ be the set of distributions $P_0$ that satisfy
\cref{basicassp} and are dominated by some $\sigma$-finite product measure $\lambda_0 =
\lambda_Y
\otimes \lambda_{\R} \otimes \lambda_\mu \otimes \lambda_\sigma$, where $\lambda_\R$ is
the Lebesgue measure and (iii) the support of $\mu \mid Y, \sigma$
under $P_0$ contains a nonempty interval almost surely. Similarly, let $\mathcal P =
\br{P^ {\obs} (P_0)
:P_0 \in \mathcal
P_0}$, where $\mathcal P$ is dominated by $\lambda = \lambda_Y \otimes \lambda_\R \otimes
\lambda_\sigma$. Let $\mathcal Z \subset \R^3$ denote the set of values that $ (Y_i, X_i,
\sigma_i)$ takes. For a given $P \in \mathcal P$, define $L^2(P)$ as the Hilbert space of
square-integrable functions $\mathcal Z \to \R$ under $P$ and $L^2_0(P)$ as the Hilbert
space of square-integrable and mean-zero functions $\mathcal Z \to \R$ under $P$.

We recall the following definitions from semiparametric theory \citep[see, e.g., chapter
25 of] [] {van2000asymptotic}.

\begin{defn}[Parametric submodel]
For a given $P \in \mathcal P$, a \emph{smooth parametric submodel} is a set $\br{P_t : t
\in [0, \epsilon)} \subset \mathcal P$ that is differentiable in quadratic mean at $t=0$
and
$P_{t=0} = P$: For some measurable function $g: \mathcal Z \to \R$,\[
\lim_{t \downarrow 0} \int \br{\frac{1}{t} (\sqrt{p_t} - \sqrt{p}) - \frac{1}{2} g 
\sqrt{p}}^{2} d\lambda = 0,
\] 
where $p_t = dP_t/d \lambda$ and $p = dP/d\lambda$ are densities with respect to the
dominating measure. We refer to $g$ as the \emph{score} of the submodel.
\end{defn}

\begin{defn}[Tangent space]
For a given $P \in \mathcal P$, the \emph{tangent set} of $\mathcal P$ at $P$ is defined
as \[
\mathcal T(P) \equiv \br{g \in L_0^2(P) : \text{There exists a smooth parametric
submodel with score $g$}} \subset L^2_0(P).
\]
The \emph{tangent space} $\Tspace(P)$ at $P$ is defined as the closure of the
linear span of $\mathcal
T(P)$ with respect to $L^2_0(P)$.
\end{defn}

\begin{defn}[Regularity and asymptotic linearity]
For a given $P \in \mathcal P$ and a given parameter $\theta(P) \in \R$, an estimator
$\hat\theta_n$ is \emph{regular} if there exists a distribution $L$ such that along all
smooth parametric submodels $P_t$, \[
\sqrt{n} (\hat\theta_n - \theta(P_{1/\sqrt{n}})) \underset{P_{1/\sqrt{n}}}{\dto} L.
\]
$\hat\theta_n$ is $\emph{asymptotically linear}$ at $P$ if there exists some
\emph{influence function} $\psi \in L_0^2(P)$ such that \[
\sqrt{n} (\hat\theta_n - \theta(P)) =  \frac{1}{\sqrt{n}} \sum_{i=1}^n
\psi(Y_i, X_i, \sigma_i; \theta(P)) + o_P(1).
\]
\end{defn}

Finally, we recall Definition 2.2 from \citet{chen2018overidentification}.

\begin{defn}[Local just identification]
For a given $P \in \mathcal P$, if $\Tspace(P) = L^2_0(P)$, we say $P$ is locally just
identified by $\mathcal P$.
\end{defn}

Loosely speaking, local just identification at $P$ means that $P$ can be perturbed in any
direction within $\mathcal P$ and the model $\mathcal P$ is consistent with any (local)
parametric model at $P$. Thus, there is no information that an analyst could exploit by
imposing the model $\mathcal P$, since $\mathcal P$ makes no restrictions locally at $P$.
This is the semiparametric analogue to just identification in parametric GMM models, where
there are no additional moment conditions to exploit, and all GMM weightings yield the
same estimator. Indeed, \citet{chen2018overidentification} (Theorem 3.1(i)) show that
when $P$ is locally just identified, all RAL estimators are asymptotically equivalent.

This section verifies the following theorem, under a technical assumption stated
immediately after.

\thmeff*

\begin{proof}
The proof follows by applying Example 25.35 in \citet{van2000asymptotic}. 

Fix $P_0
	\in \mathcal P_0$, let $Q_0$ be the corresponding distribution of $(Y_i, \mu_i,
	\sigma_i)$ under $P_0$ and let $\mathcal Q_0$ collect all such distributions as $P_0
$ ranges over $\mathcal P_0$. First observe that the tangent space at $Q_0$ relative to
	$\mathcal Q_0$ \[
		\bar{\mathcal T}(Q_0) = L^2_0(\mathcal Q_0).
	\]

	Let $\bm X = (Y_i, X_i, \sigma_i) \sim P$ and $\bm Z = (Y_i, \mu_i, \sigma_i)
	\sim Q_0$. Note
	that the density of $\bm X \mid \bm Z$ (with respect to the dominating measure
	$\delta_{y} \otimes \lambda_\R \otimes \delta_{\sigma}$) is \[p (\bm X \mid \bm Z =
	(y, \mu, \sigma)) =
	\frac{1}
	{\sqrt{2\pi} \sigma} \exp\pr{ -\frac{1}{2\sigma^2} (x - \mu)^2 },\] which is
		exponential family. The score space for $P$ consists of the functions \[ (A_{Q_0}
		b)(\bm x) \equiv \E_{Q_0}[b(\bm Z) \mid \bm X = \bm x = (y, x, \sigma)] = 
		\frac{\int_ {-\infty}^
		{\infty} b (y, \mu,
		\sigma) \frac{1}
	{\sqrt{2\pi} \sigma} \exp\pr{
		-\frac{1}{2\sigma^2} (x - \mu)^2 
	}\, dQ_0(\mu \mid y, \sigma)}{
		\int_{-\infty}^
		{\infty}  \frac{1}
	{\sqrt{2\pi} \sigma} \exp\pr{
		-\frac{1}{2\sigma^2} (x - \mu)^2 
	}\, dQ_0(\mu \mid y, \sigma)
	}
	\]
	for $b$ in the tangent space $\bar {\mathcal T}(Q_0)$. Following Example 25.35 in 
	\citet{van2000asymptotic}, we show that such scores are dense in $L^2_0(P)$. 

	It suffices to show that the closure of the range of the \emph{score operator}
	$A_{Q_0}$ is $L^2_0 (P)$, since $\bar {\mathcal T}(Q_0) = L^2_0(Q_0)$. This is further
	equivalent to showing the orthocomplement of the kernel $N(A_{Q_0}^*)$ is equal to
	$L^2_0(P)$. Thus it suffices to show that the kernel $N(A_{Q_0}^*)$ is trivial: That
	is, \[
		0 = (A_\eta^* g)(\bm z) = \E[g(\bm X) \mid \bm Z = \bm z = (y, \mu, \sigma)]
		\text{ $Q_0(\cdot \mid y, \sigma)$-a.s.} \implies g (y, \cdot, \sigma) = 0 \text{
		a.e.}
	\]
	By assumption, the support of $\mu \mid Y, \sigma$ under $Q_0$ contains an interval
	a.s., and thus the above display is true, for $Q_0$-almost all $(Y, \sigma)$, by the
	completeness of Gaussian location models (Theorem 4.3.1 in
	\citet{lehmann1986testing}). This shows that members of $N(A_ {Q_0}^*)$ are almost
	surely zero, and this completes the proof for the first statement.

	Theorem 3.1(i) in \citet{chen2018overidentification} immediately implies (1), since (a)
the tangent set $\mathcal T(P)$ is closed under linear combinations and (b) $\hat\beta$ is
an RAL estimator. For (2), since $\hat\beta$ is an RAL estimator, the projection of
$\hat\beta$'s influence function onto $\Tspace(P)$ is the efficient influence function.
However, since $\Tspace(P) = L^2_0(P)$, the projection of $\hat\beta$'s influence function
onto $\Tspace(P)$ is itself. This observation implies (2).
\end{proof}

\section{Proof of \cref{thm:minimax}}

This section restates and proves the following theorem. 
\thmminimax*

\begin{rmksq}
\Cref{lemma:lecam} provides a construction linking
functional estimation in Gaussian white noise models to estimation of $T(Q)$ and derives a
lower bound using Le Cam's two-point method, given two priors $G_{-1}, G_1$ of $\mu$.
\Cref{thm:duality} shows that certain worst-case choices of $G_{-1}, G_1$ connect to the
problem of uniform approximation by polynomials. \Cref{thm:analytic} is a result in
approximation theory that shows that functions are well-approximated by polynomials if and
only if they are analytic.

\Cref{thm:duality,lemma:lecam} links the minimax lower bound to polynomial approximation.
\Cref{thm:analytic} shows that if $f$ is not analytic, then the approximation rate cannot
decay exponentially. The proof here puts things together and verifies that, \emph{therefore}, the
minimax
rate cannot be polynomial. 
\end{rmksq}

\begin{proof}

By \cref{lemma:lecam} and \cref{thm:duality}, we know that for any integer $k \ge 1$, we have that \[
R_n(\mathcal Q_0, f) \gtrsim_{\norm{f}_\infty} E_k(f)^2 \pr{1 - C
\exp\pr{
	-\frac{k+1}{2}(\log(k+1)-1) + \frac{1}{2}\log n
}
}
\]
where $E_k(f)$ is defined in \cref{thm:duality}. If $f$ is not analytic, by 
\cref{thm:analytic}, we have that\footnote{Note that $E_k(f)$ is bounded and decreasing,
and so $E_k(f)^{1/k} \le E_1(f)^{1/k} \to 1.$} \[\limsup_{k\to\infty} E_k(f)^{1/k} = 1.
\numberthis \label{eq:analytic_rate}
\]
Hence there exists a subsequence $(k_\ell)$ where $E_{k_\ell}(f)^{1/k_\ell} \to 1$ as $\ell
\to \infty$. Note that we can take another subsequence $(k_n)$ such that (i) $k_n
\rateeq \log n$ and
(ii) infinitely many elements from $(k_\ell)$ features in $(k_n)$. 

For such a $k_n$, \[
R_n(\mathcal Q_0, f) \gtrsim_{\norm{f}_\infty} E_{k_n}(f)^2.
\]

Suppose, for contradiction, there is some $\alpha$ for which \[
\limsup_{n\to\infty} n^{\alpha} R_n(\mathcal Q_0, f) < \infty.
\]
Then \begin{align*}
\limsup_{n\to\infty} n^{\alpha} E_{k_n}(f)^2  \le \infty.
\end{align*}
This means that for some subsequence of $(k_n)$ that contains infinitely many  elements of
$k_\ell$, $(k_ {n_m})$, we have that $
\sup_m |n_m^\alpha E_{k_{n_m}(f)}^{2}| < M^2 < \infty. 
$
However, \begin{align*}
E_{k_{n_m}}(f)^{1/k_{n_m}} &= (n_m^{\alpha/2}E_{k_{n_m}}(f) )^{1/k_{n_m}} n_m^{-
\frac{\alpha}{2k_{n_m}}} \\
&\le M^{1/k_{n_m}} \exp\pr{-\frac{\alpha}{2k_{n_m}} \log n_m} \\
&\overset{m \to \infty}{\longrightarrow} \exp(-\alpha c) < 1
\end{align*}
where $
c \equiv 2\lim_{m\to\infty} \frac{\log n_m}{k_{n_m}} > 0
$
since $k_{n_m} \rateeq \log n_m$. This contradicts \eqref{eq:analytic_rate}.
\end{proof}

\begin{lemma}
\label{lemma:lecam}
In this problem, for any $G_{-1}, G_1$ supported on $[-1,1]$, \[
R_n(\mathcal Q_0, f) \gtrsim_{\norm{f}_\infty} \pr{\E_{G_{1}}[f
(\mu)]
- \E_{G_{-1}}[f(\mu)]}^2 \pr{1-\frac{1}{2\sqrt{2}} \sqrt{n \chi^2(G_{-1}\star \Norm
(0,1), G_1\star \Norm(0,1))}}
\]
where $\chi^2(Q_1, Q_2)$ is the $\chi^2$-divergence between $Q_1$ and $Q_2$. 
\end{lemma}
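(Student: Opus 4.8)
The plan is to prove the bound by Le Cam's two-point method, so the entire task reduces to exhibiting, for the given priors $G_{-1},G_1$, two data-generating processes $Q_{-1},Q_1\in\mathcal Q_0$ whose observed-data laws are close in $\chi^2$ while their target parameters $\tau_0$ are far apart, with the parameter gap proportional to $\E_{G_1}[f(\mu)]-\E_{G_{-1}}[f(\mu)]$. The key design idea is to let $Y$ be a binary label that merely \emph{selects} which of the two priors generates $\mu$, and to build the two hypotheses by \emph{swapping} the roles of $G_{-1}$ and $G_1$ across the two values of $Y$. Concretely, fix $\sigma_i\equiv 1$, draw $Y_i\sim\mathrm{Bernoulli}(1/2)$, and set $X_i\mid\mu_i\sim\Norm(\mu_i,1)$ with $X_i\indep Y_i\mid\mu_i$; under $Q_1$ let $\mu_i\mid Y_i{=}1\sim G_1$ and $\mu_i\mid Y_i{=}0\sim G_{-1}$, and under $Q_{-1}$ swap these two conditionals. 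This manifestly satisfies \cref{basicassp} (Gaussian $X$, $\E[\sigma^2]=1>0$, $\sigma>0$), and membership in $\mathcal Q_0$ follows once the support and nondegeneracy conditions are checked; crucially, the swap leaves the marginal law of $\mu_i$ (hence of $X_i$) equal to the mixture $\tfrac12(G_1+G_{-1})$ under both hypotheses.

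Because the $\mu$-marginal, and thus $\var(f(\mu))$, is identical across hypotheses, only the sign of the coupling between $Y_i$ and $f(\mu_i)$ changes. A direct computation gives $\cov_{Q_s}(Y_i,f(\mu_i))=\tfrac{s}{4}\pr{\E_{G_1}[f]-\E_{G_{-1}}[f]}$ for $s\in\{-1,1\}$, with common denominator $V\equiv\var(f(\mu))$ under the mixture, so that $\tau_0(Q_1)-\tau_0(Q_{-1})=\tfrac{1}{2V}\pr{\E_{G_1}[f]-\E_{G_{-1}}[f]}$. Since $f$ is bounded, $V=\var(f(\mu))\le\norm{f}_\infty^2$, which yields $\pr{\tau_0(Q_1)-\tau_0(Q_{-1})}^2\gtrsim_{\norm{f}_\infty}\pr{\E_{G_1}[f]-\E_{G_{-1}}[f]}^2$. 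This is exactly the promised reduction: separating $\tau_0$ is no easier than separating the means $\E_{G_1}[f]$ and $\E_{G_{-1}}[f]$, a gap that can be detected only through the deconvolution channel $X_i$.

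It remains to control the closeness of the observed-data laws. Writing $D(x)=(g_1\star\phi)(x)-(g_{-1}\star\phi)(x)$ for $\phi$ the standard normal density, the swap makes the joint densities of $(Y_i,X_i)$ under $Q_1$ and $Q_{-1}$ differ only through the sign of $D$ on each value of $Y_i$; hence $\chi^2(P^\obs(Q_1),P^\obs(Q_{-1}))$ reduces to the $\chi^2$-divergence between the Gaussian-convolved priors $G_{\pm1}\star\Norm(0,1)$ (up to symmetrization of the two directions). Le Cam's two-point inequality then gives $R_n\ge\tfrac14\pr{\tau_0(Q_1)-\tau_0(Q_{-1})}^2\pr{1-\mathrm{TV}(Q_1^{\otimes n},Q_{-1}^{\otimes n})}$, and tensorizing via the standard Hellinger and chi-squared relations bounds $\mathrm{TV}(Q_1^{\otimes n},Q_{-1}^{\otimes n})$ by $\tfrac{1}{2\sqrt2}\sqrt{n\,\chi^2(G_{-1}\star\Norm(0,1),G_1\star\Norm(0,1))}$; combining the two displays delivers the claim. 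I expect the only real obstacle to be the design step itself—finding a coupling that flips $\cov(Y,f(\mu))$ while leaving every marginal, and hence the distinguishability of the data, untouched; once the swap-of-conditionals construction is in hand, the parameter computation and the divergence bound are routine, with the mild bookkeeping point that one must verify $Q_{\pm1}\in\mathcal Q_0$, in particular that the mixture keeps $\var(f(\mu))>\tfrac12 V(f)$ and $\var(\mu)>0$.
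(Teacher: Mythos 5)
Your high-level strategy matches the paper's---Le Cam's two-point method with $\sigma \equiv 1$, a balanced binary $Y$ that selects which prior generates $\mu$, and the Gaussian channel $X \mid \mu \sim \Norm(\mu,1)$---and your covariance algebra is correct. But two steps fail as written. First, your two hypotheses need not lie in $\mathcal Q_0$, so the two-point bound over $\mathcal Q_0$ cannot use them. The lemma quantifies over \emph{arbitrary} $G_{-1},G_1$ supported on $[-1,1]$, and under your construction the $\mu$-marginal is $\frac12(G_1+G_{-1})$; nothing guarantees $\var(f(\mu)) > \frac12 V(f)$, which is restriction (b) defining $\mathcal Q_0$. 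For instance, take $G_1 = \delta_a$, $G_{-1} = \delta_b$ with $f(a)\neq f(b)$ but $\frac14(f(a)-f(b))^2 \le \frac12 V(f)$: the mean gap is nonzero, yet your $Q_{\pm 1}\notin\mathcal Q_0$; the discrete moment-matching pairs to which the lemma is applied via \cref{thm:duality} pose the same problem. You flag this check, but your assertion that ``the mixture keeps $\var(f(\mu))>\frac12 V(f)$'' is precisely what is false: mixing the two priors with each other gives no such bound. The paper's device is to dilute each prior with the uniform, $G_j^* = \frac12 G_j + \frac12\Unif[-1,1]$, which forces $\var_{G_j^*}(f(\mu)) \ge \frac12 V(f)$ (and $\var(\mu)>0$), costs only a factor $\frac12$ in the mean gap, and by convexity of $\chi^2$ only tightens the divergence bound. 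This step is not optional bookkeeping.

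Second, your ``swap'' design makes \emph{both} $Y$-branches of the observed-data law differ across the two hypotheses, so the per-observation divergence is the symmetrized quantity $\frac12\chi^2(G_1\star\Norm(0,1), G_{-1}\star\Norm(0,1)) + \frac12\chi^2(G_{-1}\star\Norm(0,1), G_1\star\Norm(0,1))$. Since $\chi^2$ (and KL) are not symmetric, this sum cannot in general be bounded by the single direction $\chi^2(G_{-1}\star\Norm(0,1), G_1\star\Norm(0,1))$ appearing in the statement; your argument therefore proves a variant of the lemma with a symmetrized divergence, and the constant $\frac{1}{2\sqrt2}$ is not recoverable this way. The paper instead makes the two hypotheses \emph{share} one branch: under $Q_0$, $\mu\indep Y$ with $\mu \sim G_{-1}^*$, and under $Q_1$ only the conditional given $Y=1$ is changed to $G_1^*$. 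The shared branch then contributes zero to the KL, only one direction of KL/$\chi^2$ appears (with an extra factor $\frac12$ from conditioning on $Y$), and $\tau_0(Q_0)=0$ exactly; this is where the stated one-sided $\chi^2$ and the factor $\frac{1}{2\sqrt2}$ come from. Both gaps are repairable---adopt the uniform dilution and the shared-branch rather than swapped design---but as written your construction does not deliver the statement.
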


\begin{proof}
Fix some distributions $G_{-1}, G_1$ supported on $[-1,1]$. Let $G_{-1}^*, G_1^*$ be the
mixture \[
G_j^* = \frac{1}{2} G_{j} + \frac{1}{2} \Unif[-1,1].
\]
Note that $\var_{G_j^*}[f(\mu)] \ge \frac{V(f) }{2}$. Choose distributions such that
$\sigma = 1$ almost surely. 

Consider induced $Q_0, Q_1$
where
under $Q_j$, \begin{align*}
Y &\sim_{Q_j} \mathrm{Rademacher}(1/2) \text{ for $j = 0,1$}\\ 
\mu \mid Y &\sim_{Q_0} G_{-1}^* \\ 
\mu \mid Y = -1 &\sim_{Q_1} G_{-1}^* \\ 
\mu \mid Y = 1 &\sim_{Q_1} G_1^*.
\end{align*}

Thus, $Q_0, Q_1 \in \mathcal Q$. Note that \[
T(Q_0) = \frac{
	\frac{1}{2}\E_{G_{-1}^*}[f(\mu)] - \frac{1}{2}\E_{G_{-1}^*}[f(\mu)]
}{\var_{Q_0}(f(\mu))} = 0
\]
and \begin{align*}
T(Q_1) &= \frac{
	\frac{1}{2}\E_{G_{1}^*}[f(\mu)] - \frac{1}{2}\E_{G_{-1}^*}[f(\mu)]
}{\var_{G_1^*}(f(\mu))} \\&= \frac{1}{2\var_{G_1^*}(f(\mu)) }\pr{\E_{G_{1}^*}[f
(\mu)]
- \E_{G_{-1}^*}[f(\mu)]} \\
&=\frac{1}{4\var_{G_1^*}(f(\mu)) }\pr{\E_{G_{1}}[f(\mu)]
- \E_{G_{-1}}[f(\mu)]}
\end{align*}

Let $P(Q) = P^\obs(Q)$ denote distribution of the observed data induced by $Q \in \mathcal
Q_0$. Recall that Le Cam's two-point method states that for some absolute $c > 0$ \[
R_n(\mathcal Q_0, f) \ge c (T(Q_1) - T(Q_0))^2 \pr{1- \mathrm{TV}\pr{ P(Q_1)^{\otimes n},
P(Q_0)^{\otimes n} }}.
\]

Note that \begin{align*}
&1- \mathrm{TV}\pr{
	P(Q_1)^{\otimes n}, P(Q_0)^{\otimes n}
} \\ &\ge 1-\frac{1}{\sqrt{2}}\sqrt{\mathrm{KL}\pr{P(Q_0)^{\otimes
n}, P(Q_1)^{\otimes n}}} \\
&= 1-\frac{1}{\sqrt{2}}\sqrt{n\mathrm{KL}\pr{P(Q_0) , P(Q_1) }} \tag{Tensorization of $
\mathrm{KL}$} \\
&= 1-\frac{1}{\sqrt{2}}\sqrt{\frac{n}{2}\pr{\mathrm{KL}\pr{G_{-1}^* \star \Norm(0,1), G_
{-1}^* \star\Norm(0,1)} + \mathrm{KL}\pr{G_{-1}^* \star \Norm(0,1), G_
{1}^*\star \Norm(0,1)}}} \tag{Conditional KL given $Y=0,Y=1$} \\ 
&= 1- \frac{1}{2}\sqrt{n \mathrm{KL}(G_{-1}^* \star \Norm(0,1), G_1^* \star \Norm(0,1))} \\ 
&= 1-\frac{1}{2} \sqrt{n \chi^2(G_{-1}^*\star \Norm(0,1), G_1^*\star \Norm(0,1))} \\ 
&\ge 1-\frac{1}{2\sqrt{2}} \sqrt{n \chi^2(G_{-1}\star \Norm(0,1), G_1\star \Norm(0,1))} 
\tag{By convexity: $\chi^2(G_{-1}^* \star\Norm(0,1), G_1^*\star \Norm(0,1)) \le \frac{1}{2} \chi^2
	\pr{G_ {-1} \star \Norm(0,1), G_1 \star \Norm(0,1)}$
}
\end{align*}

Putting things together, we have that 
\begin{align*}
&R_n(\mathcal Q_0, f) \\
&\gtrsim  \frac{1}{\var_{G_1}(f(\mu))^2} \pr{\E_{G_{1}}[f
(\mu)]
- \E_{G_{-1}}[f(\mu)]}^2 \pr{1-\frac{1}{2\sqrt{2}} \sqrt{n \chi^2(G_{-1}\star \Norm
(0,1), G_1\star \Norm(0,1))}} \\ 
&\gtrsim_{\norm{f}_\infty} \pr{\E_{G_{1}}[f
(\mu)]
- \E_{G_{-1}}[f(\mu)]}^2 \pr{1-\frac{1}{2\sqrt{2}} \sqrt{n \chi^2(G_{-1}\star \Norm
(0,1), G_1\star \Norm(0,1))}}.
\end{align*}
\end{proof}

\begin{theorem}
\label{thm:duality}
Fix bounded and measurable $f : [-1,1]\to \R$, let \[
E_k(f) = \inf_{a_{0:k} \in \R^{k+1}} \norm{f(x) - a_0 - a_1 x^1 - \cdots - a_k x^k}_\infty
\]
be the best approximation error of $f$ via polynomials. Then, for universal $c, C >
0$, for any $k \ge 1$, there exist $G_ {-1}, G_1$
such that \[
|\E_{G_1}[f(\mu)] - \E_{G_{-1}}[f(\mu)]| \ge c E_k(f)
\]
and \[
\chi^2\pr{G_{1} \star \Norm(0,1), G_{-1} \star \Norm(0,1)} \le C \exp\pr{-(k+1)(\log
(k+1)-1)}.
\]
\end{theorem}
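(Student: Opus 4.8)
The plan is to construct the two priors as discrete measures on the equioscillation points of the best degree-$k$ polynomial approximation of $f$, and to handle the two displayed inequalities by separate mechanisms: the $f$-gap comes from pairing the extremal signed measure against $f$ à la Chebyshev, while the $\chi^2$ bound comes from a Hermite expansion in which matching the first $k$ moments annihilates all low-order terms. I treat $\mathcal{P}_k$ as the (dimension $k+1$) space of degree-$\le k$ polynomials, $\phi$ as the standard Normal density, and $H_m$ as the probabilists' Hermite polynomials.

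For the first display, let $p^\star \in \mathcal{P}_k$ attain $E_k(f) = \norm{f - p^\star}_\infty$ and set $g = f - p^\star$. By Chebyshev equioscillation, $g$ attains $\pm E_k(f)$ with alternating signs at $k+2$ points $t_0 < \cdots < t_{k+1}$ in $[-1,1]$. Since $\mathcal{P}_k$ is $(k+1)$-dimensional, the space of signed measures on these $k+2$ points annihilating $\mathcal{P}_k$ is (generically) one-dimensional; I pick a generator $\nu = \sum_i w_i \delta_{t_i}$, whose weights necessarily alternate in sign, normalize $\norm{\nu}_{\mathrm{TV}} = 2$, and fix the global sign so that $\operatorname{sign}(w_i) = \operatorname{sign}(g(t_i))$. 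Setting $G_1 = \nu_+$ and $G_{-1} = \nu_-$ (each a probability measure on $[-1,1]$ because $\int d\nu = 0$ and $\norm{\nu}_{\mathrm{TV}} = 2$) gives $\int \mu^j \, d(G_1 - G_{-1}) = 0$ for $j = 0, \ldots, k$ and $\E_{G_1}[f(\mu)] - \E_{G_{-1}}[f(\mu)] = \int g \, d\nu = \sum_i |w_i| E_k(f) = 2 E_k(f)$, which is the first display with $c = 2$.

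For the $\chi^2$ bound I would expand the Gaussian kernel as $\phi(x - \mu) = \phi(x) \sum_{m \ge 0} \frac{\mu^m}{m!} H_m(x)$. Writing $g_j = G_j \star \Norm(0,1)$ and using that $G_1, G_{-1}$ share their first $k$ moments, every term with $m \le k$ cancels, so $g_1(x) - g_{-1}(x) = \phi(x) \sum_{m > k} \frac{\Delta_m}{m!} H_m(x)$ with $\Delta_m = \E_{G_1}[\mu^m] - \E_{G_{-1}}[\mu^m]$ and $|\Delta_m| \le 2$. Hermite orthogonality ($\int \phi H_m H_{m'} = m!$ when $m = m'$ and $0$ otherwise) then gives $\int \frac{(g_1 - g_{-1})^2}{\phi} = \sum_{m > k} \frac{\Delta_m^2}{m!} \le 4 \sum_{m > k} \frac{1}{m!} \lesssim \frac{1}{(k+1)!}$, and the Stirling bound $(k+1)! \ge ((k+1)/e)^{k+1}$ converts this into $C \exp(-(k+1)(\log(k+1) - 1))$, exactly the target rate.

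The main obstacle is that the divergence in the statement carries the \emph{mixture} density $g_{-1}$, not $\phi$, in the denominator, so I must upgrade the $\phi$-weighted estimate to $\chi^2(g_1 \| g_{-1}) = \int (g_1 - g_{-1})^2 / g_{-1}$. The clean route is a pointwise lower bound $g_{-1} \gtrsim \phi$: since $g_{-1}(x)/\phi(x) = \E_{G_{-1}}[e^{\mu x - \mu^2/2}] \ge e^{-1/2} \E_{G_{-1}}[e^{\mu x}] \ge e^{-1/2} e^{x \, \E_{G_{-1}}[\mu]}$ by Jensen's inequality, the ratio is bounded below by the universal constant $e^{-1/2}$ as soon as the common mean $\E_{G_{-1}}[\mu] = \E_{G_1}[\mu]$ is zero, whence $\chi^2(g_1 \| g_{-1}) \le e^{1/2} \int (g_1 - g_{-1})^2/\phi$ and the previous bound applies verbatim. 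I would therefore impose the extra centering constraint $\E_{G_{-1}}[\mu] = 0$ in the construction; because the mean is already a matched low-order moment, its dual multiplier contributes only an affine (degree-$\le 1$) term lying inside $\mathcal{P}_k$, so the approximation-duality value is unaffected and the $f$-gap stays of order $E_k(f)$. Making this centering fully rigorous is the delicate part of the argument. Absent it, the crude bound $g_{-1}(x) \ge \phi(x) e^{-|x| - 1/2}$ costs only a harmless sub-exponential factor $e^{O(\sqrt{k})}$ in the $\chi^2$ estimate (traced through $\E_{\Norm(\pm 1, 1)}[H_m^2] \lesssim m! \, e^{2\sqrt{m}}$), which is immaterial to the $k \asymp \log n$ calculation underlying \cref{thm:minimax} even though it would not reproduce the stated exponent exactly.
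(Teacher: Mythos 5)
Your proposal takes a genuinely different route from the paper's proof, which is essentially a two-line citation: the paper obtains the moment-matching pair and the $\chi^2$ bound from Theorem 3.3.3 of \citet{wu2020polynomial}, and the separation $c\,E_k(f)$ from the linear-programming duality between moment matching and uniform polynomial approximation (their (2.9)). You instead construct the pair explicitly (equioscillation points of $f - p^\star$ together with the alternating-sign annihilating measure, i.e.\ divided-difference weights) and prove the $\chi^2$ bound by hand via the Hermite expansion $\phi(x-\mu) = \phi(x)\sum_m \frac{\mu^m}{m!}H_m(x)$. This is a legitimate, more self-contained strategy---it essentially unpacks the two results the paper cites---but as written it has two genuine gaps.

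First, Chebyshev equioscillation requires $f$ to be continuous. The theorem asserts the result for bounded, measurable $f$, and this generality matters: the downstream application in \cref{thm:minimax} concerns non-smooth $f$ such as indicators $\one(\mu > \mu_0)$, for which $f - p^\star$ need not attain its supremum and the alternation characterization fails. The paper's duality route does not need attainment of extrema; your construction would need a near-extremal-point or limiting argument, or a retreat to the LP duality itself, to cover the stated class of $f$.

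Second, the centering step you flag as ``delicate'' is indeed the unresolved point, and your proposed justification (that imposing $\E_{G_{-1}}[\mu]=0$ only perturbs the dual by an affine term inside $\mathcal P_k$) does not work as stated: adding a constraint to the primal shrinks the feasible set, so you must actually show the optimal gap does not collapse. There is a one-line rigorous fix you missed: letting $\bar\mu \in [-1,1]$ denote the common mean of your pair, replace $G_{\pm 1}$ by $\tfrac12 G_{\pm 1} + \tfrac12 \delta_{-\bar\mu}$. This keeps the support in $[-1,1]$, preserves the matching of the first $k$ moments, makes the common mean zero (so your Jensen bound $g_{-1} \ge e^{-1/2}\phi$ applies verbatim), and only halves the gap, yielding the theorem with $c = 1$. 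Note also that your fallback---the crude bound $g_{-1}(x) \ge \phi(x)e^{-|x|-1/2}$---does not prove the theorem as stated: the resulting $e^{O(\sqrt{k})}$ factor grows with $k$ and cannot be absorbed into the universal constant $C$, even though, as you correctly observe, it would be harmless for the $k \asymp \log n$ argument in \cref{thm:minimax}.
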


\begin{proof}
By Theorem 3.3.3 in \citet{wu2020polynomial}, there exists $G_{1}, G_{-1}$ that matches
the first $k$ moments and \[
\chi^2\pr{G_{1} \star \Norm(0,1), G_{-1} \star \Norm(0,1)} \le C \exp\pr{-(k+1)(\log
(k+1)-1)}.
\]
We can maximize $\E_{G_1}[f(\mu)] - \E_{G_{-1}}[f (\mu)]$ over moment-matching
distributions $G_{1}, G_{-1}$. The dual of this linear program is the problem of uniform
polynomial approximation. By (2.9) in \citet{wu2020polynomial}, the optimal value of this
problem is of the form $c E_k(f)$, where $E_k(f)$ is the approximation error.
\end{proof}

\begin{theorem}
\label{thm:analytic}
A function $f: [-1,1]\to \R$ is analytic if and only if \[
\limsup_{k \to \infty} E_k(f)^{1/k} < 1. 
\]
\end{theorem}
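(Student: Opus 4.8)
The plan is to prove both implications through the classical correspondence between best polynomial approximation on $[-1,1]$ and holomorphic extension to \emph{Bernstein ellipses} $E_\rho$, the images of the circles $\{|w| = \rho\}$ under the Joukowski map $z = \frac{1}{2}(w + w^{-1})$ for $\rho > 1$. The single analytic fact underlying everything is the behavior of Chebyshev polynomials on these ellipses, $|T_n(z)| \asymp \rho^n/2$ for $z \in E_\rho$, which simultaneously controls how fast a degree-$k$ polynomial bounded on $[-1,1]$ can grow off the real axis and how fast the Chebyshev coefficients of a function holomorphic on $E_\rho$ must decay.

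For the forward direction (analytic $\Rightarrow$ exponential decay), I would first use analyticity to extend $f$ holomorphically to an open neighborhood of $[-1,1]$, hence to the closed ellipse $E_\rho$ for some $\rho > 1$. Writing the Chebyshev coefficients $a_n = \frac{2}{\pi}\int_{-1}^1 f(x) T_n(x)(1-x^2)^{-1/2}\,dx$, the substitution $x = \cos\theta$ followed by contour deformation into the complex plane (justified by holomorphy on $E_\rho$) yields the geometric bound $|a_n| \le C\rho^{-n}$. Since the truncation $\sum_{n \le k} a_n T_n$ is a degree-$k$ polynomial, $E_k(f) \le \sum_{n > k} |a_n|\,\norm{T_n}_\infty = O(\rho^{-k})$, so $\limsup_{k\to\infty} E_k(f)^{1/k} \le \rho^{-1} < 1$.

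For the reverse direction, suppose $r \equiv \limsup_{k\to\infty} E_k(f)^{1/k} < 1$, and fix radii $1 < \rho < \rho_1 < 1/r$, so that $E_k(f) \le C\rho_1^{-k}$ eventually. Let $p_k$ be a best degree-$k$ approximant and set $q_k = p_k - p_{k-1}$, a polynomial of degree $\le k$ with $\norm{q_k}_{[-1,1]} \le E_k(f) + E_{k-1}(f) \le 2C\rho_1^{-(k-1)}$. The key step is Bernstein's growth inequality $\norm{q_k}_{E_\rho} \le \rho^k \norm{q_k}_{[-1,1]}$, which gives $\norm{q_k}_{E_\rho} \le 2C\rho_1(\rho/\rho_1)^k$; since $\rho < \rho_1$ this is summable. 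Hence the telescoping series $f = p_0 + \sum_{k \ge 1} q_k$, which converges to $f$ on $[-1,1]$, in fact converges uniformly on the closed ellipse $E_\rho$ to a holomorphic limit agreeing with $f$ on $[-1,1]$. This extends $f$ analytically to the interior of $E_\rho$, a complex neighborhood of $[-1,1]$, establishing analyticity.

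The main obstacle in both directions is the same sharp control of polynomials on Bernstein ellipses: quantifying $|T_n|$ on $E_\rho$ and, in the reverse direction, verifying through the Bernstein--Walsh inequality that Chebyshev polynomials are extremal for off-axis growth among degree-$k$ polynomials bounded on $[-1,1]$. These facts are entirely classical, so beyond citing them carefully the remaining work is only bookkeeping of constants and the nested choice of radii $\rho < \rho_1 < 1/r$.
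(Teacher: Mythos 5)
Your proof is correct. Note that the paper does not actually prove this statement itself: its ``proof'' is a one-line citation to Theorem 8.1 of DeVore and Lorentz, \emph{Constructive Approximation} (this is Bernstein's classical theorem on approximation of analytic functions), and the argument given there is precisely the Bernstein-ellipse argument you reconstruct --- Chebyshev-coefficient decay via contour deformation for the direction (analytic $\Rightarrow$ geometric decay of $E_k(f)$), and the Bernstein--Walsh growth inequality applied to the telescoping series $p_0 + \sum_{k\ge 1}(p_k - p_{k-1})$ of best approximants, converging uniformly on a closed ellipse, for the converse. So your proposal follows essentially the same route as the result the paper invokes; the only difference is that you supply, correctly, the details the paper outsources to the reference.
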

\begin{proof}
This is Theorem 8.1 in \cite{devore1993constructive}.
\end{proof}

\section{Details for the empirical application to \citet{bau2020teacher}}
\label{sec:details_for_the_empirical_application}

Replication code for \citet{bau2020teacher} outputs a set of noisy classroom effects $X_c$
and estimates the variance of class effects $\hat\sigma_c^2$, school effects
$\hat\sigma_s^2$, teacher effects $\hat\sigma_\mu^2$, and
idiosyncratic variation in student test scores $\hat\sigma_e^2$. 

Within a teacher $i$, the classroom effects are uncorrelated, have common mean $\mu_i$,
and variances $v_c = \sigma_c^2 + \frac{\sigma^2_e}{n_c}$, for $n_c$ the size of a
class.
Following footnote 23 of \citet{bau2020teacher}, a teacher effect is the
precision-weighted average of classroom effects for that teacher, and follows \[
	X_i = \sum_{c} w_c X_c \sim \Norm(\mu_i, \sigma_i^2) \quad w_c \propto 1/v_c
\]
where the sum is over classrooms that are taught by teacher $i$. The variance of the
teacher effect \[
	\var(X_i \mid \mu_i) = \pr{\sum_c \frac{1}{v_c}}^{-1}
\]
is the harmonic sum of the variance of the classroom effects. We thus treat $\sigma_i^2$
as $(\sum_c 1/\hat v_c)^{-1}$ where $\hat v_c$ replaces $\sigma_c^2, \sigma_e^2$ with
their estimated counterparts produced by \citet{bau2020teacher}. 

These effects are computed for each test score subject (math, English, Urdu). Following
\citet{bau2020teacher}, we treat \[
	X_i = \frac{1}{3} \pr{X_i^{\text{math}} + X_i^{\text{English}} + X_i^{\text{Urdu}}},
\]
as the measure of teacher value-added, and its variance $\sigma_i^2$ as $1/3$ times the
arithmetic mean of the variances for each subject.

\end{document}